\newcommand{\be}{\mathbf{e}}
\newcommand{\bv}{\mathbf{v}}
\newcommand{\bu}{\mathbf{u}}
\newcommand{\bI}{\mathbf{I}}
\newcommand{\by}{\mathbf{y}}
\newcommand{\bz}{\mathbf{z}}
\newcommand{\bH}{\mathbf{H}}
\renewcommand{\frac}{\dfrac}
\newcommand{\K}{{\cal K}}
\newcommand{\N}{{\cal N}}
\newcommand{\SI}{\mbox{SINR}}
\newtheorem{dingyi}{Definition~}
\newtheorem{dingli}{Theorem~}
\newtheorem{yinli}{Lemma~}
\newtheorem{lizi}{Example~}
\newenvironment{proof}[1][Proof]{\begin{trivlist}
\item[\hskip \labelsep {\bfseries #1}]}{\end{trivlist}}
\begin{document}
  \title{Dynamic Spectrum Management: A Complete Complexity Characterization 
\author{Ya-Feng Liu}
 \thanks{Y.-F.~Liu is with the State Key Laboratory
of Scientific and Engineering Computing, Institute of Computational
Mathematics and Scientific/Engineering Computing, Academy of
Mathematics and Systems Science, Chinese Academy of Sciences,
Beijing, 100190, China (e-mail:
{{yafliu}@lsec.cc.ac.cn}). This work is partially supported by the National Natural
Science Foundation of China under Grants 11671419, 11301516, 11331012, 11631013, and 11571221.}}
\maketitle

\begin{abstract}
\boldmath Consider a multi-user multi-carrier communication system where multiple users share multiple discrete subcarriers. To achieve high spectrum efficiency, the users in the system must choose their transmit power dynamically in response to fast channel fluctuations. Assuming perfect channel state information, two formulations for the spectrum management (power control) problem are considered in this paper: the first is to minimize the total transmission power subject to all users' transmission data rate constraints, and the second is to maximize the min-rate utility subject to individual power constraints at each user. It is known in the literature that both formulations of the problem are polynomial time solvable when the number of subcarriers is one and strongly NP-hard when the number of subcarriers are greater than or equal to three. However, the complexity characterization of the problem when the number of subcarriers is two has been missing for a long time. This paper answers this long-standing open question: both formulations of the problem are strongly NP-hard when the number of subcarriers is two.


\end{abstract}
\begin{keywords}
Complexity theory, multi-carrier communication system, spectrum management, strong NP-hardness. %
\end{keywords}
\IEEEpeerreviewmaketitle


\section{Introduction}

In multi-carrier (multi-tone) communication systems, the transmission frequency spectrum is partitioned into a number of orthogonal subcarriers on which parallel data can be simultaneously transmitted without causing interferences with each other. However, different users interfere with each other on the same subcarrier. The common examples of multi-carrier systems include wireless orthogonal frequency division multiplex (OFDM) systems (such as the 802.11) and wireline discrete multi-tone (DMT) systems (such as the digital subscriber line (DSL) system). In both of OFDM and DMT systems, a pair of discrete Fourier transform (DFT) and inverse discrete Fourier transform (IDFT) is used to effectively decompose the frequency-selective wideband channel into a group of non-selective narrowband subcarriers, which makes them robust against large delay spreads by preserving orthogonality in the frequency domain \cite{goldsmith2005wireless,tse2005fundamentals}.

Spectrum management, also called spectrum balancing or power control, is a central issue in the design of interference-limited multi-user multi-carrier communication systems. This is because in such systems the achievable data rate of each user depends not only on its own power allocation but also on the power allocation of all other users. The spectrum management problem in multi-user multi-carrier communication systems is often formulated as an optimization problem such as the system utility maximization problem subject to power budget constraints or the total power minimization problem subject to Quality-of-Service (QoS) constraints.

The spectrum management problem in the interference-limited multi-user multi-carrier communication system has been extensively studied; see \cite{luo08jstpdynamic,hayashi2009spectrum,liu2014complexity,yu02jsacdistributed,luo06eurasipiwf,yu06tcomosb,cendrillon05icciterative,lui05icclowcomplexity,yu06tcomdual,papandriopoulos09titscale,luo2009duality,wang2012scale,tsiaflakis2014real,Sahu15iwf,tsiaflakis2010improved,moraes2014general,liu14wcldownlink,yu07ISIT,moraes2014spherical,liu2015iterative} and references therein. The authors of \cite{luo08jstpdynamic} showed that the problem (under various optimization models) is (strongly) NP-hard when the number of subcarriers is greater than or equal to three. They also identified several subclasses of the problem which are polynomial time solvable when the number of subcarriers is one, such as the min-rate utility maximization problem and the total power minimization problem. However, the complexity characterization of the problem for the case where the number of subcariers is two was missed for a long time in the literature. {This might be due to the following two reasons. \begin{itemize}
   \item [--] The standard way to prove an optimization problem is NP-hard is to establish a polynomial time reduction from a known NP-complete problem to its corresponding feasibility problem or decision problem \cite{Complexitybook,combook2,combook3,combinatorial}. Since there are a large number of NP-complete problems involving ``three'', one has a lot of choices to pick an NP-complete problem involving ``three'' (as did in \cite{luo08jstpdynamic}), establish a polynomial time reduction from it to the spectrum management problem with three subcarriers, and show that the problem is NP-hard. In contrast, there are few NP-complete problems involving ``two'' and this makes it hard to show the NP-hardness of the spectrum management problem when the number of subcarriers is two. In fact, there is usually a complexity gap between problems involving ``two'' and ``three'', i.e., 3-SATISFIABILITY, 3-DIMENSIONAL MATCHING, and 3-COLORABILITY problems are all NP-complete but 2-SATISFIABILITY, 2-DIMENSIONAL MATCHING, and 2-COLORABILITY problems are all polynomial time solvable.
   \item [--] Exploring the complexity of an optimization problem typically involves switching back and forth between trying to develop a polynomial time algorithm for the problem and trying to prove it NP-hard, until one of them succeeds. Since many problems involving ``two'' are polynomial time solvable (as mentioned above), this might give a wrong sense that the spectrum management problem is polynomial time solvable when the number of subcarriers is two, which makes it hard to characterize the complexity of the problem.
 \end{itemize}}


The complexity results in \cite{luo08jstpdynamic} suggest that there is no polynomial time algorithms which can solve the general spectrum management problem to global optimality (unless P$=$NP), and determining an approximately optimal or locally optimal spectrum management strategy in polynomial time is more realistic in practice (especially when a very fast responsiveness is required \cite{tsiaflakis2014real}). Therefore, various (heuristic) algorithms \cite{luo08jstpdynamic,hayashi2009spectrum,liu2014complexity,yu02jsacdistributed,luo06eurasipiwf,yu06tcomosb,cendrillon05icciterative,lui05icclowcomplexity,yu06tcomdual,papandriopoulos09titscale,luo2009duality,wang2012scale,tsiaflakis2014real,Sahu15iwf,tsiaflakis2010improved,moraes2014general,liu14wcldownlink,yu07ISIT,moraes2014spherical}, including iterative water-filling algorithms, dual decomposition algorithms, and successive convex/concave approximation algorithms, have been proposed for solving the problem. 

In this paper, we focus on the characterization of the computational complexity status of the spectrum management problem for the multi-user multi-carrier communication system. In particular, we consider two formulations of the problem. The first one is the problem of minimizing the total transmission power subject to all users' QoS constraints. The second one is the problem of maximizing the minimum rate among all users while respecting the total transmission power constraint of each user. {The main contribution of this paper is to answer a long-standing open question: both aforementioned formulations of the spectrum management problem are strongly NP-hard when the number of subcarriers is two. {The developed techniques in this paper can be extended to show the (strong) NP-hardness of other related optimization problems involving ``two'' arising from signal processing and wireless communications.}
}
%

%


\section{Problem Formulation}

Consider a multi-user multi-carrier communication system, where there are $K$ users (transmitter-receiver pairs) sharing $N$ discrete subcarriers. Denote the set of users and the set of subcarriers by $\K=\left\{1,2,\ldots,K\right\}$ and $\N=\left\{1,2,\ldots,N\right\}$, respectively. For any $k\in\K$ and $n\in\N$, suppose $s_k^n\in\mathbb{C}$ to be the symbol that transmitter $k$ wishes to send to receiver $k$ on subcarrier $n$, then the received signal $\hat s_k^n$ at receiver $k$ on subcarrier $n$ can be expressed by
$$\hat s_k^n=\sum_{j\in\K}h_{k,j}^n s_j^n+ z_k^n,$$ where $h_{k,j}^n\in\mathbb{C}$ is the channel coefficient between the $j$-th transmitter and the $k$-th receiver on subcarrier $n$ and $z_k^n\in\mathbb{C}$ is the additive white Gaussian
noise (AWGN) with distribution $\cal{CN}$$({0}, \eta_k^n).$
%
Denoting the power of $s_k^n$ by $p_k^n$; i.e., $p_k^n:=|s_k^n|^2$, the received power at receiver $k$ on subcarrier $n$ is given by
$$\sum_{j\in\K}g_{k,j}^np_j^n+\eta_k^n,~k\in\K,~n\in\N,$$ where $g_{k,j}^n:=|h_{k,j}^n|^2$ 
stands for the channel gain between the $j$-th transmitter and the $k$-th receiver on subcarrier $n.$ Treating interference as noise, we can write the SINR of receiver $k$ on subcarrier $n$ as
$$\SI_k^n=\frac{g_{k,k}^np_k^{n}}{\displaystyle\sum_{j\neq k}g_{k,j}^np_j^n+\eta_k^n},~k\in\K,~n\in\N,$$ and transmitter $k$'s achievable data rate $R_k$ {(nats/sec)} as \begin{equation}\label{rk}R_{k}=\sum_{n\in\N}{\ln}\left(1+\frac{g_{k,k}^np_k^{n}}{\displaystyle\sum_{j\neq k}g_{k,j}^np_j^n+\eta_k^n}\right),~k\in\K.\end{equation}

In this paper, we consider the following two formulations of the spectrum management problem: 
\begin{equation}\label{problem}
 \begin{array}{cl}
\displaystyle \min_{\left\{p_k^n\right\}} & \displaystyle \sum_{k\in\K}\sum_{n\in\N}p_k^{n}\\[5pt]
\text{s.t.} & R_{k}\geq \gamma_{k},~k\in\K, \\
          & \displaystyle \sum_{n\in\N} p_k^n \leq \bar p_k,~k\in\K,\\
        & p_k^n\geq 0,~k\in\K,~n\in\N,
    \end{array}
\end{equation}
 and \begin{equation}\label{problemmin}
 \begin{array}{cl}
\displaystyle \max_{\left\{p_k^n\right\}} & \displaystyle \min_{k\in\K}\left\{R_{k}\right\}\\[5pt]
\text{s.t.} & \displaystyle \sum_{n\in\N}p_k^{n}\leq \bar p_{k},~k\in\K, \\
        & p_k^n\geq 0,~k\in\K,~n\in\N,
    \end{array}
\end{equation}where $\gamma_k$ is the desired transmission rate target of user $k$ and $\bar p_k$ is the power budget of transmitter $k.$ Formulation \eqref{problem} minimizes the total transmission power of all users on all subcarriers and  formulation \eqref{problemmin} maximizes the minimum transmission rate among all users.

\section{Complexity Analysis}
In this section, we first briefly introduce complexity theory in Section \ref{introduction}. Then, we review existing complexity results of problems \eqref{problem} and \eqref{problemmin} and show that both problems are strongly NP-hard when the number of subcarriers is two in Section \ref{subsec:newcom}. {Finally, we extend the developed techniques to show the (strong) NP-hardness of other related optimization problems involving ``two'' arising from signal processing and wireless communications in Section \ref{sec-extension}.}

\subsection{A Brief Introduction to Complexity Theory}\label{introduction}

In computational complexity theory{\cite{Complexitybook,combook2,combook3,combinatorial}}, a problem is said to be NP-hard if it is at least as hard as any problem in the class NP (problems that are solvable in Nondeterministic Polynomial time). NP-complete problems are the hardest problems in NP in the sense that if any NP-complete problem is solvable in polynomial time, then each problem in NP is solvable in polynomial time. A problem is strongly NP-hard (strongly NP-complete) if it is NP-hard (NP-complete) and it cannot be solved by a pseudo-polynomial time algorithm. {An algorithm that solves a problem is called a \emph{pseudo-polynomial} time algorithm if its time complexity function is bounded above by a polynomial function related to both of the {length} and the numerical values of the given data of the problem. This is in contrast to the polynomial time algorithm whose time complexity function depends only on the length of the given data of the problem.} It is widely believed that there can not exist a polynomial time algorithm to solve any NP-complete, NP-hard, or strongly NP-hard problem (unless P$=$NP). 

The standard way to prove an optimization problem is NP-hard is to establish the NP-hardness of its corresponding feasibility problem or decision problem. The latter is the problem to decide if the global minimum (maximum) of the optimization problem is below (above) a given threshold or not. To show a decision problem $\mathcal{P}_2$ is NP-hard,
we usually follow three steps: 1) choose a suitable NP-complete decision problem $\mathcal{P}_1;$ 2) construct a {polynomial
time} transformation from any instance of $\mathcal{P}_1$ to an instance of $\mathcal{P}_2;$
3) prove under this transformation that any instance of problem
  $\mathcal{P}_1$ is true if and only if the constructed instance of problem $\mathcal{P}_2$ is true. See \cite{Complexitybook,combinatorial,combook2,combook3} for more on complexity theory.

\subsection{Strong NP-Hardness of Problems \eqref{problem} and \eqref{problemmin} when $N=2$}\label{subsec:newcom}
Both problems \eqref{problem} and \eqref{problemmin} are polynomial time solvable when $N=1.$ More specifically, when $N=1,$ problem \eqref{problem} is equivalent to
\begin{equation}\label{problemN=1}
 \begin{array}{cl}
\displaystyle \min_{\left\{p_k\right\}} & \displaystyle \sum_{k\in\K}p_k\\[3pt]
\text{s.t.} & {g_{k,k}p_k}\geq \left(\exp{\left(\gamma_{k}\right)}-1\right)\left(\displaystyle\sum_{j\neq k}g_{k,j}p_j+\eta_k\right),~k\in\K, \\[3pt]
        & \bar p_k\geq p_k\geq 0,~k\in\K,
    \end{array}
\end{equation} which is a linear program (solvable in polynomial time). When $N=1,$ problem \eqref{problemmin} reduces to
\begin{equation}\label{problemminN=1}
 \begin{array}{cl}
\displaystyle \max_{\tau,\,\left\{p_k\right\}} & \displaystyle \tau \\[5pt]
\text{s.t.} & g_{k,k}p_k\geq \tau\left(\displaystyle\sum_{j\neq k}g_{k,j}p_j+\eta_k\right),~k\in\K,\\[5pt]
& \bar p_k\geq p_k\geq 0,~k\in\K,
    \end{array}
\end{equation}which is polynomial time solvable by using a binary search on $\tau;$ see \cite[Theorem 2]{luo08jstpdynamic}. In fact, both problems \eqref{problem} and \eqref{problemmin} are also polynomial time solvable (by the water-filling algorithm) when $K=1$ (i.e., there is only a single user in the system) \cite[Theorem 4.1]{liu2014complexity}.

Problems \eqref{problem} and \eqref{problemmin} become computationally intractable when the number of subcarriers is greater than or equal to three. In particular, it is shown in \cite[Theorem 2]{luo08jstpdynamic} that problem \eqref{problem} is strongly NP-hard when $N\geq 3.$ By using the same argument as in the proof of \cite[Theorem 2]{luo08jstpdynamic}, one can also show the strong NP-hardness of problem \eqref{problemmin} with $N\geq 3.$ However, the complexity characterization of problems \eqref{problem} and \eqref{problemmin} with $N=2$ has been missing for a long time in the literature. In this subsection, we answer this open question and show that both of problems \eqref{problem} and \eqref{problemmin} remain strongly NP-hard when $N=2.$ 

The NP-hardness proof of problems \eqref{problem} and \eqref{problemmin} for the case $N=2$ is based on a polynomial time
{reduction} from the MAX-2UNANIMITY~problem, which was first introduced in \cite{liu11tspbeamforming}. To describe the problem, we first
define the UNANIMITY property of a disjunctive clause. {Recall
that for a given set of Boolean variables, a literal is defined as
either a Boolean variable or its negation, while a disjunctive
clause refers to a logical expression consisting of the logical
``OR" of literals.}

\begin{dingyi}[UNANIMOUS] For a given truth assignment to a set of Boolean variables,
a disjunctive clause is said to be {unanimous} if all literals
in the clause have the same value (whether it is the {True} or
the {False} value). 
\end{dingyi}

\begin{dingyi}[MAX-2UNANIMITY]
Given a positive integer $M$ and $m$ disjunctive clauses defined over $n$
Boolean variables, where the number of literals in each clause is $2,$ the {MAX-2UNANIMITY}~problem is to check whether there exists a truth assignment
such that the number of {unanimous} disjunctive clauses is at
least $M$.  
\end{dingyi}

{\begin{lizi}\label{example}Given Boolean variables $x_1,x_2,x_3,x_4,$ define $c_{1}=x_1\vee {\bar x}_2,~c_2=x_1\vee {x}_3,$ $c_3=\bar x_2\vee\bar x_4,~c_4=\bar x_3\vee x_4.$ All of $x_1,x_2,x_3,x_4$ and their negations $\bar x_1,\bar x_2,\bar x_3,\bar x_4$ are literals; all of $c_1,c_2,c_3,c_4$ are disjunctive clauses; if we set $x_1=1, x_2=0, x_3=1, x_4=0,$ then all of $c_1, c_2, c_3, c_4$ are unanimous (or satisfied unanimously); the clauses $c_1,c_2,c_3,c_4$ defined on $x_1,x_2,x_3,x_4$ along with some given positive $M$ is an instance of the MAX-2UNANIMITY problem. 
\end{lizi}}

\begin{yinli}[\!\!\cite{liu11tspbeamforming}]\label{MAX-Full}
MAX-2UNANIMITY problem is NP-complete.
\end{yinli}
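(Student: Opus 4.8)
The plan is to prove the two halves of NP-completeness separately: membership in NP and NP-hardness. Membership is immediate: a truth assignment to the $n$ Boolean variables is a certificate of polynomial length, and given such an assignment one can test each of the $m$ two-literal clauses for unanimity and count the unanimous ones in time linear in the input, then compare the total against $M$. Hence MAX-2UNANIMITY lies in NP.

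For NP-hardness, the key step I would establish first is a clean characterization of what unanimity means for a clause with exactly two literals. Consider a clause $c=\ell_1\vee\ell_2$ formed from variables $x_i$ and $x_j$. If the two literals have opposite signs, say $c=x_i\vee\bar x_j$ with $i\neq j$, then $c$ is unanimous precisely when $x_i=\bar x_j$, i.e.\ when $x_i$ and $x_j$ receive \emph{different} truth values. Thus an opposite-sign two-literal clause encodes exactly the inequality constraint $x_i\neq x_j$. (Dually, a same-sign clause such as $x_i\vee x_j$ or $\bar x_i\vee\bar x_j$ encodes the equality $x_i=x_j$, while a degenerate clause $x_i\vee\bar x_i$ is never unanimous.) This observation points directly to a reduction from the MAX-CUT problem, which is a classical NP-complete problem.

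The reduction I would then carry out is the following. Given a MAX-CUT instance consisting of an undirected graph $G=(V,E)$ and a threshold $M$, I introduce one Boolean variable $x_i$ for each vertex $v_i\in V$, and for each edge $(v_i,v_j)\in E$ I introduce the two-literal clause $c_{ij}=x_i\vee\bar x_j$; this transformation is plainly computable in polynomial time. Any truth assignment splits $V$ into a True part and a False part, and under this correspondence the clause $c_{ij}$ is unanimous if and only if $x_i$ and $x_j$ lie on opposite sides of the partition, i.e.\ if and only if the edge $(v_i,v_j)$ is cut. Consequently the number of unanimous clauses equals the size of the induced cut, so $G$ has a cut of size at least $M$ if and only if the constructed MAX-2UNANIMITY instance admits an assignment with at least $M$ unanimous clauses. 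Combined with membership in NP, this yields NP-completeness.

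The part I expect to require the most care is not the reduction itself---which is essentially immediate once the characterization above is in hand---but rather the sign bookkeeping that underlies that characterization: one must verify for every sign pattern of a two-literal clause that unanimity translates into either an equality or an inequality constraint between the underlying variables, and confirm that the degenerate self-clause case contributes nothing. Only the inequality clauses are needed for the MAX-CUT embedding, so the other clause types can simply be avoided; checking the full correspondence nonetheless clarifies that MAX-2UNANIMITY is equivalent to the problem of maximizing the number of satisfied linear equations over GF$(2)$, of which MAX-CUT is the all-inequality special case.
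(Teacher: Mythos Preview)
Your argument is correct: membership in NP is immediate, and your reduction from MAX-CUT is sound, since a two-literal clause $x_i\vee\bar x_j$ is unanimous precisely when $x_i\neq x_j$, so unanimous clauses count cut edges exactly.

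There is, however, nothing in the paper to compare against. The paper does not prove this lemma; it is quoted from \cite{liu11tspbeamforming} and used as a black box, with the text proceeding directly to Theorem~\ref{thm_complexity}. Your MAX-CUT reduction is a natural and self-contained route to the result, and the additional remark that MAX-2UNANIMITY is equivalent to maximizing satisfied linear equations over GF$(2)$ is a nice structural observation, though not needed for the lemma as stated.
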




%

We are now ready to prove our main results.

\begin{dingli}\label{thm_complexity}
  Problem \eqref{problem} is strongly NP-hard when {$N = 2$}.
\end{dingli}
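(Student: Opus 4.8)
The plan is to give a polynomial-time reduction from MAX-2UNANIMITY, which is NP-complete by Lemma~\ref{MAX-Full}, to the decision version of problem \eqref{problem} with $N=2$, namely: ``is the minimum total power at most a given threshold $P^\star$?'' Given a MAX-2UNANIMITY instance with $n$ Boolean variables, $m$ two-literal clauses, and integer $M$, I would construct in polynomial time an instance of \eqref{problem} having $N=2$ subcarriers and a number of users polynomial in $n$ and $m$, together with rate targets $\gamma_k$, power budgets $\bar p_k$, channel gains $g_{k,j}^n$, and a threshold $P^\star$, chosen so that the minimum total power is at most $P^\star$ if and only if some truth assignment makes at least $M$ clauses unanimous.

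The encoding I would use has two parts. First, for each variable $x_i$ I introduce a \emph{variable user} whose two subcarriers are symmetric in isolation, and I couple it through strong mutual interference (e.g. to a companion/dummy user sharing the same rate-and-budget pressure) so that meeting all rate targets within the budgets $\sum_n p_k^n\le\bar p_k$ is only possible if the variable user places all of its power on a single subcarrier: transmitting on subcarrier $1$ encodes $x_i=\text{True}$ and on subcarrier $2$ encodes $x_i=\text{False}$, with a literal's polarity flipping the reading. Crucially, this binary behaviour is forced by \emph{feasibility} under strong interference, not by each user's own power--rate tradeoff (which, by concavity of $\ln(1+\cdot)$, would otherwise favour spreading power). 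Second, for each clause I add a \emph{clause gadget} whose rate constraint couples only to the two variable users appearing in that clause, arranged so that the power the gadget must expend to meet its target is small when the two literals agree (the clause is unanimous) and large otherwise. Summing contributions, the minimum total power equals a fixed baseline from the variable users plus a term that is monotonically decreasing in the number of unanimous clauses, and calibrating $P^\star$ makes ``total power $\le P^\star$'' equivalent to ``at least $M$ clauses unanimous.''

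Correctness then splits into the usual two directions. For the ``if'' direction, given a truth assignment making at least $M$ clauses unanimous, I would exhibit the explicit binary power allocation (each variable user on its encoded subcarrier, each gadget at its cheap power level), and verify directly that every constraint $R_k\ge\gamma_k$ and $\sum_n p_k^n\le\bar p_k$ holds and that the objective is at most $P^\star$. For the ``only if'' direction, I would take an optimal solution of total power at most $P^\star$, invoke the forced binary structure to read off a genuine $\{\text{True},\text{False}\}$ assignment, and use the gadget power accounting to conclude that at least $M$ clauses are unanimous. Finally, because all gains, targets, budgets, and $P^\star$ can be taken to be integers (or rationals) of magnitude polynomial in $n$ and $m$, and MAX-2UNANIMITY is strongly NP-complete (its only numerical datum $M\le m$ is polynomially bounded), the reduction establishes strong NP-hardness of \eqref{problem} when $N=2$.

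The step I expect to be the main obstacle is proving the binary-forcing property rigorously: the variables $p_k^n$ are continuous, so I must show that every feasible solution respecting the budgets and rate targets is driven into the intended discrete pattern rather than hedging by splitting power across both subcarriers. This requires choosing the strong cross-gains $g_{k,j}^n$ and the targets $\gamma_k$ so that any split allocation makes some coupled user's SINR too small to reach its target within budget, and doing so \emph{uniformly} over all clause/gadget configurations while keeping every numerical value polynomially bounded. Turning this feasibility-forcing argument into an exact equivalence -- so that the additive, concave rate functions $R_k=\sum_{n}\ln(1+\SI_k^n)$ cannot be exploited to beat the threshold $P^\star$ by a fractional allocation -- is where the real work lies.
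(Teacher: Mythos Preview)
Your proposal is correct and follows essentially the same route as the paper: a polynomial-time reduction from MAX-2UNANIMITY to the decision version of \eqref{problem} with $N=2$, using for each Boolean variable a pair of mutually interfering users (the paper's ``variable'' and ``auxiliary variable'' users $i$ and $n+i$) whose rate targets and budgets force a binary all-on-one-subcarrier allocation, and for each clause a gadget user whose power cost is strictly smaller when the clause is unanimous. The binary-forcing step you flag as the main obstacle is exactly the content of the paper's Lemma~\ref{lemma_small_power}, proved there by a clean relaxation argument (adding the two rate inequalities yields $p_1^1 p_2^2 + p_1^2 p_2^1 \ge 1$, then AM--GM pins down the two discrete optima); your expectation that this would be the crux is well placed.
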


\begin{proof} 
Given any instance of the MAX-2UNANIMITY problem with clauses
$c_1,c_2,\ldots,c_m$ defined over Boolean variables $x_1,x_2,\ldots,x_n$
and an integer $M,$ we construct below a multi-user multi-carrier interference channel with $2n+m$ users and $2$ subcarriers, where the Boolean variable $x_i~(i=1,2,\ldots,n)$ corresponds to a pair of users, including user $i$ (called ``variable user'') and user $n+i$ (called ``auxiliary variable user''); each clause $c_j~(j=1,2,\ldots,m)$ corresponds to user $2n+j$ (called ``clause user''). 
Hence, ${\K}=\{1,2,\ldots,2n+m\}$ and $\N=\left\{1,2\right\}.$

{Next, we construct channel parameters for all $2n+m$ users on $2$ subcarriers. Before going into very details, let us first give a high level preview of the construction. More specifically, we first construct channel parameters of all users associated with Boolean variables (i.e., ``variable users'' and ``auxiliary variable users'') such that the only ways for each pair of the users to satisfy their transmission rate requirements are that one user transmits full power on one subcarrier and the other user transmits full power on the other subcarrier. Then, we construct channel parameters of all users associated with clauses (i.e., ``clause users'') such that each ``clause user'' suffers interferences from only two ``variable users'' and/or ``auxiliary variable users'', whose types (i.e., ``variable user'' or ``auxiliary variable user'') and indices are determined by the two literals appearing in the corresponding clause. In addition, we construct channel parameters of ``clause users'' such that the required total transmission power for the user to satisfy its transmission rate constraint when the corresponding clause is unanimous is strictly less  than the one when the corresponding clause is not. In this way, the total transmission power for all users to satisfy their transmission rate constraints depends on the number of unanimous clauses and less total transmission power is needed if and only if more clauses are satisfied unanimously.}

%
%

{We now construct the direct-link and crosstalk channel gains among these $2n+m$ users on $2$ subcarriers. 
The direct-link channel gains of all users on two subcarriers are set to be $$g_{k,k}^1=g_{k,k}^2=1,~k\in\K.$$ The corresponding crosstalk channel gains on $2$ subcarriers are: for user $k=1,2,\ldots,n,$ set $$g_{k,n+k}^1=g_{k,n+k}^2=1~\text{and}~g_{k,\ell}^1=g_{k,\ell}^2=0, \forall~\ell\in\K\setminus\left\{k, n+k\right\};$$ for user $k=n+1,n+2,\ldots,2n,$ set $$g_{k,k-n}^1=g_{k,k-n}^2=1~\text{and}~g_{k,\ell}^1=g_{k,\ell}^2=0,~\forall~\ell\in\K\setminus\left\{k-n, k\right\};$$ for user $k=2n+1,2n+2,\ldots,2n+m,$ set $g_{k,\ell}^1=g_{k,\ell}^2=0$ for all $\ell\in\K$ except 
\begin{equation*}\label{matrix}\begin{array}{l} \left\{
\begin{array}{ll}g_{k,\ell}^1=g_{k,\ell}^2=1,&\mbox{if $x_{\ell}$ appears in $c_{k-2n}$;}\\
g_{k,n+\ell}^1=g_{k,n+\ell}^2=1,&\mbox{if $\bar x_{\ell}$ appears in $c_{k-2n}.$}\end{array}\right.
\end{array}\end{equation*}
Set $\eta_k^n=1$ for all $k\in\K$ and $n\in\N.$} Then, the transmission rate expressions of all users are: for $i=1,2,\ldots,n,$ 
\begin{equation}\label{variablerate1}
 R_{i}=\ln\left(1+\frac{p_{i}^1}{1+p_{n+i}^1}\right)+\ln\left(1+\frac{p_{i}^2}{1+p_{n+i}^2}\right)
 \end{equation} and
 \begin{equation}\label{variablerate2}
 R_{n+i}=\ln\left(1+\frac{p_{n+i}^1}{1+p_{i}^1}\right)+\ln\left(1+\frac{p_{n+i}^2}{1+p_{i}^2}\right);
\end{equation}
for $j=1,2,\ldots,m,$ 
{\begin{equation}\label{clauserate}R_{2n+j}=\left\{\!\!\!\!\!\!\!
  \begin{array}{ll}
&\displaystyle \ln\left(1+\frac{p_{2n+j}^1}{1+p_{i_1}^1+p_{i_2}^1}\right)+\ln\left(1+\frac{p_{2n+j}^2}{1+p_{i_1}^2+p_{i_2}^2}\right),\,\text{if}~c_j=x_{i_1}\vee x_{i_2};\\[15pt]
&\displaystyle \ln\left(1+\frac{p_{2n+j}^1}{1+p_{i_1}^1+p_{n+i_2}^1}\right)+\ln\left(1+\frac{p_{2n+j}^2}{1+p_{i_1}^2+p_{n+i_2}^2}\right),\,\text{if}~c_j=x_{i_1}\vee \bar x_{i_2};\\[15pt]
&\displaystyle \ln\left(1+\frac{p_{2n+j}^1}{1+p_{n+i_1}^1+p_{i_2}^1}\right)+\ln\left(1+\frac{p_{2n+j}^2}{1+p_{n+i_1}^2+p_{i_2}^2}\right),\,\text{if}~c_j=\bar x_{i_1}\vee x_{i_2};\\[15pt]
&\displaystyle \ln\left(1+\frac{p_{2n+j}^1}{1+p_{n+i_1}^1+p_{n+i_2}^1}\right)+\ln\left(1+\frac{p_{2n+j}^2}{1+p_{n+i_1}^2+p_{n+i_2}^2}\right),\,\text{if}~c_j=\bar x_{i_1}\vee \bar x_{i_2}.
  \end{array}\right.
\end{equation}}In the above, each user $k$ is associated with two variables $p_k^1$ and $p_k^2$ for $k\in\K;$ each ``variable user'' $i$ suffers interference from ``auxiliary variable user'' $n+i$ on both subcarriers $1$ and $2;$ each ``auxiliary variable user'' $n+i$ suffers interference from ``variable user'' $i$ on both subcarriers $1$ and $2;$ each ``clause user'' $2n+j$ suffers interference from ``variable user'' $i_1$ and $i_2$ and/or ``auxiliary variable user'' $n+i_1$ and $n+i_2,$ where $c_j$ contains literals of $x_{i_1}$ and $x_{i_2}.$ To make the construction of transmission rate expressions clear, an illustrative
example is given in Appendix \ref{app-example}.



Moreover, let $\gamma_{k}=\ln2$ and $\bar p_k=1$ for all $k\in\K.$ Then, the constructed instance of problem \eqref{problem} is
\begin{equation}\label{constructed}
\begin{array}{cl}
\displaystyle\min_{\left\{p_k^n\right\}} &\displaystyle \sum_{k\in\K}\sum_{n\in\N}p_k^n\\[10pt]
\mbox{s.t.}&\displaystyle R_{k}\geq \ln2,\,k\in\cal K,\\[1pt]
&\displaystyle p_k^1+p_k^2 \leq 1,\,k\in\K,\\[3pt]
&\displaystyle p_k^n\geq 0,\,k\in\K,~n\in\N,
\end{array}
\end{equation}where $R_k$ are given in \eqref{variablerate1}, \eqref{variablerate2}, and \eqref{clauserate}. {The variable correspondence between the MAX-2UNANIMITY problem and problem \eqref{constructed} is listed as Table \ref{relation}.}

{\begin{table}[!h]
\tabcolsep 5mm \caption{Variable Correspondence between MAX-2UNANIMITY Problem and Problem \eqref{constructed}} \label{relation}
{\begin{tabular}{ll}
\hline \multicolumn{1}{l}{MAX-2UNANIMITY problem}&\multicolumn{1}{l}{Problem \eqref{constructed}}
\\ \hline
$m$ clauses defined over $n$ variables & $2n+m$ users communicate over $2$ subcarriers \\
Boolean variable $x_i$   &power allocation variables $p_i^1,p_i^2~p_{n+i}^1,p_{n+i}^2$ of users $i$ and $n+i$\\
clause $c_j$      &transmission rate $R_{{{2n+j}}}$ in \eqref{clauserate} of user $2n+j$ \\
literal $x_i$ appears in clause $c_j$&  user $i$ causes interference $p_i^1$ and $p_i^2$ to user $2n+j$ on $2$ subcarriers\\
literal $\bar{x}_i$ appears in clause $c_j$&     user $n+i$ causes interference $p_{n+i}^1$ and $p_{n+i}^2$ to user $2n+j$ on $2$ subcarriers\\
\hline
\end{tabular}}
\end{table}}

{We claim that the transformation from the MAX-2UNANIMITY problem to problem \eqref{constructed} can be performed in polynomial time. The number of users and the number of subcarriers in problem \eqref{constructed} are $2n+m$ and $2,$ respectively. Hence, the size of problem \eqref{constructed} is bounded above by a polynomial (linear) function of the size of the MAX-2UNANIMITY instance. Moreover, the construction of channel parameters/transmission rate expressions for all $2n+m$ users is straightforward, i.e., for $i=1,2,\ldots,n,$ transmission rate expressions of users $i$ and $n+i$ (associated with variable $i$) are explicitly given in \eqref{variablerate1} and \eqref{variablerate2}; and for $j=1,2,\ldots,m,$ transmission rate expression of user $2n+j$ (associated with clause $j$) is explicitly given in \eqref{clauserate}. Therefore, the transformation from the MAX-2UNANIMITY problem to problem \eqref{constructed} can be performed in polynomial time.}

%

Next, we show that there exists a truth assignment such that at least $M$
clauses are satisfied unanimously for the given MAX-2UNANIMITY
instance if and only if the optimal value of problem \eqref{constructed} is less than or equal to $2n+M+4(\sqrt{2}-1)(m-M).$

If there exists a truth assignment such that $M$ clauses in the
MAX-2UNANIMITY problem are unanimous, we claim that the optimal value of problem \eqref{constructed} is less than or equal to $2n+M+4(\sqrt{2}-1)(m-M).$ Let $\left\{x_i\right\}$ be the truth assignment such that $M$ clauses are unanimous in the MAX-2UNANIMITY problem. We set
$$p_{i}^1=p_{n+i}^2=1-x_i,~p_{i}^2=p_{n+i}^1=x_i,~i=1,2,\ldots,n.$$ With this, we can simply check that
$R_{k}\geq \ln 2$ for all~$k=1,2,\ldots,2n.$ Furthermore, we consider transmission rate requirements of the ``clause variable'' $2n+j$ with $j=1,2,\ldots,m.$

\begin{itemize}
  \item [-] If the clause $c_j$ is unanimous, then we have either
  $$R_{2n+j}=\ln\left(1+{p_{2n+j}^1}\right)+\ln\left(1+\frac{p_{2n+j}^2}{3}\right)$$
  or $$R_{2n+j}=\ln\left(1+\frac{p_{2n+j}^1}{3}\right)+\ln\left(1+{p_{2n+j}^2}\right).$$ In either cases, we can use a total transmission power of $1$ to make
  $R_{2n+j}\geq \ln2$ satisfied (by setting $\left(p_{2n+j}^1,p_{2n+j}^2\right)^T=\left(1,0\right)^T$ in the former case and $\left(p_{2n+j}^1,p_{2n+j}^2\right)^T=\left(0,1\right)^T$ in the latter case).
  \item [-] If the clause $c_j$ is not unanimous, then we must have
  $$R_{2n+j}=\ln\left(1+\frac{p_{2n+j}^1}{2}\right)+\ln\left(1+\frac{p_{2n+j}^2}{2}\right).$$ In this case, we can use a total transmission power of $4\left(\sqrt{2}-1\right)$ to make $R_{2n+j}\geq \ln2$ satisfied (by setting $p_{2n+j}^1=p_{2n+j}^2=2\left(\sqrt{2}-1\right)$).
\end{itemize}
As a result, if there exists a truth assignment such that at least $M$ clauses are satisfied unanimously, then the optimal value of problem \eqref{constructed} is less than or equal to $2n+M+4(\sqrt{2}-1)(m-M).$

For the converse part, assuming that the optimal value of problem \eqref{constructed} is less than or equal to $2n+M+4(\sqrt{2}-1)(m-M),$
we claim that at least $M$ clauses can be made unanimous. 
It follows from Lemma \ref{lemma_small_power} {in Appendix \ref{app:lemma}} that, for $i=1,2,\ldots,n,$ the optimal solution of problem \eqref{constructed} must be $$\left(p_{i}^1,p_{i}^2,p_{n+i}^1,p_{n+i}^2\right)^T=\left(1,0,0,1\right)$$ or
$$\left(p_{i}^1,p_{i}^2,p_{n+i}^1,p_{n+i}^2\right)^T=\left(0,1,1,0\right).$$
 This, together with \eqref{clauserate}, implies that the received total interferences at user $2n+j$ must be exactly $2$ for all $j=1,2,\ldots,m.$ More specifically, there might be two cases:
\begin{itemize}
  \item [-] Case 1: the received interference at user $2n+j$ is equal $2$ on one subcarrier and is equal to $0$ on the other one;
  \item [-] Case 2: the received interference at user $2n+j$ is equal to $1$ on both subcarriers.
\end{itemize}
If Case 1 happens for user $2n+j$, then the required total transmission power satisfying $R_{2n+j}\geq \ln2$ is at least $1;$ while if Case 2 happens for user $2n+j$, then the required total transmission power satisfying $R_{2n+j}\geq \ln2$ is at least $4(\sqrt{2}-1).$ By the assumption that the optimal value of problem \eqref{constructed} is less than or equal to $2n+M+4(\sqrt{2}-1)(m-M),$ we know that Case 1 must happen at least $M$ times (Case 2 cannot happen more than $m-M$ times). Moreover, it can be checked that $$x_i=1-p_{i}^1,~i=1,2,\ldots,n$$ is a truth assignment
which makes at least $M$ clauses in the MAX-2UNANIMITY problem satisfied unanimuously.

Since the MAX-2UNANIMITY
problem is NP-complete (cf. Lemma \ref{MAX-Full}), we conclude that the problem of checking the optimal value of problem \eqref{constructed} is less than or equal to $2n+M+4(\sqrt{2}-1)(m-M)$ is strongly NP-hard. Hence, problem \eqref{problem} is strongly NP-hard.
\end{proof}

\begin{dingli}\label{thm-min}
  Problem \eqref{problemmin} is strongly NP-hard when {$N = 2$}. 
\end{dingli}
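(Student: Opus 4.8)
The plan is to mirror the reduction used for Theorem \ref{thm_complexity}: starting from an instance of MAX-2UNANIMITY with clauses $c_1,\dots,c_m$ over variables $x_1,\dots,x_n$ and integer $M$, I would build a two-subcarrier system containing the same ``variable users'' $1,\dots,n$ and ``auxiliary variable users'' $n+1,\dots,2n$, with identical direct gains, crosstalk gains and noises, and with per-user budgets $\bar p_k = 1$. The role of this gadget is unchanged: the only way a variable/auxiliary pair can reach rate $\ln 2$ is the orthogonal allocation $(1,0,0,1)$ or $(0,1,1,0)$, so requiring their rates to meet a common target pins them to a genuine Boolean assignment and caps their rate at $\ln 2$; accordingly I would set the target value of the max-min objective to $\tau = \ln 2$. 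The essential new difficulty, and the reason $N=2$ is harder than $N\ge 3$, is that the ``all clauses unanimous'' feasibility question is exactly a $2$-coloring / XOR-consistency problem and is polynomially solvable. Hence a bottleneck objective cannot simply certify satisfiability as in the $N\ge 3$ reductions, and the optimal value of \eqref{problemmin} must instead be forced to reflect the \emph{number} of unanimous clauses.

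The main device I would add is a single \emph{aggregator} user $0$ whose received interference grows by a fixed amount for every non-unanimous clause, so that its best achievable rate is a strictly decreasing function of the count $m-U$ of non-unanimous clauses, where $U$ is the number of unanimous ones. I would exploit the fact already visible in the proof of Theorem \ref{thm_complexity} that a non-unanimous clause forces balanced interference $(1,1)$ on its clause user, so that reaching $\ln 2$ compels that clause user to \emph{spread} power symmetrically across both subcarriers, whereas a unanimous clause (interference $(2,0)$ or $(0,2)$) is met by \emph{concentrating} power on the clean subcarrier. By giving user $0$ a small, equal crosstalk gain from the clause users on both subcarriers, each non-unanimous clause would contribute a fixed \emph{balanced} increment to user $0$'s interference on each subcarrier. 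Calibrating $\bar p_0$ and the target so that user $0$ attains $\ln 2$ exactly when $m-U \le m-M$ would then make $\max\min_k R_k \ge \ln 2$ equivalent to the existence of an assignment with $U \ge M$.

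For the forward direction I would take an assignment with $U \ge M$ unanimous clauses, use the orthogonal variable allocation, satisfy each clause user with total power $1$ (unanimous) or $4(\sqrt{2}-1)$ (non-unanimous) exactly as in the proof of Theorem \ref{thm_complexity}, and check that the resulting balanced interference still lets user $0$ reach $\ln 2$, so the min-rate is at least $\ln 2$. For the converse I would invoke an analog of Lemma \ref{lemma_small_power}: any allocation with min-rate $\ge \ln 2$ must drive every variable/auxiliary pair to an orthogonal $0/1$ allocation (yielding a truth assignment) and must place every clause user in either the concentrated or the balanced regime; the rate constraint on user $0$ then forces at most $m-M$ clauses into the balanced (non-unanimous) regime, i.e.\ $U \ge M$. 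Strong NP-hardness would follow as for Theorem \ref{thm_complexity}, since all channel data are rational and polynomially bounded and MAX-2UNANIMITY is strongly NP-complete.

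The hard part will be making the aggregation \emph{faithful to the count} despite the per-subcarrier structure of the rate. A naive ``total-power meter'' --- a user whose interference equals the total transmit power --- does not work, because a single user's rate depends on the two subcarriers separately: the optimizer could pile all interference onto one subcarrier and water-fill around it on the other, so the meter cannot detect the aggregate. The construction must therefore guarantee that the per-clause contributions to user $0$ are \emph{balanced across the two subcarriers}, so that the split freedom cannot be exploited to evade the count. In particular I must ensure that the imbalanced, concentrated contributions of \emph{unanimous} clauses are neutralized or made negligible --- for instance by taking the crosstalk gain to user $0$ small enough that their total contribution lies strictly inside the margin separating the thresholds for $U \ge M$ and $U = M-1$, or by adding a per-clause gadget that routes only the non-unanimous ``spread'' power to user $0$. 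Proving the converse structural lemma and this balance/margin estimate simultaneously is the technical crux of the argument.
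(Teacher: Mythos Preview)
Your high-level strategy matches the paper's: reduce from MAX-2UNANIMITY, keep the variable/auxiliary-variable gadget of Theorem~\ref{thm_complexity} to force Boolean assignments, keep the clause users, and add one further ``aggregator'' user (the paper's \emph{super user}) whose attainable rate encodes the number of unanimous clauses. You also correctly isolate the main obstacle --- a single user's rate sees the two subcarriers separately, so uneven interference can be water-filled around and a naive total-power meter fails. What is missing is an actual resolution of this obstacle: your two suggested fixes (a small-crosstalk margin argument, or an unspecified per-clause gadget) are left as outlines, and the margin route is genuinely delicate because the aggregator's interference profile depends on how the unanimous clauses split across the two subcarriers, not just on their count $U$, so its water-filled rate is not a function of $U$ alone and the converse implication does not follow without further work.

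The paper closes this gap with one device you did not anticipate. For each clause user $2n+j$ it adds a \emph{mirror} ``auxiliary clause user'' $2n+m+j$ whose interferers are the complements of those of $2n+j$ (wherever $2n+j$ sees $p_{\ell}^n$, the mirror sees $p_{n+\ell}^n$, and vice versa). Since the Boolean gadget forces $(p_i^1,p_i^2,p_{n+i}^1,p_{n+i}^2)\in\{(1,0,0,1),(0,1,1,0)\}$, the interference patterns faced by $2n+j$ and $2n+m+j$ are swaps of one another across the two subcarriers, and at the optimum $p_{2n+j}^1=p_{2n+m+j}^2$, $p_{2n+j}^2=p_{2n+m+j}^1$. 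The super user receives unit-gain interference from all $2m$ of these users on both subcarriers; by the swap symmetry its interference on subcarrier~$1$ equals that on subcarrier~$2$, and both equal $\sum_{j=1}^m(p_{2n+j}^1+p_{2n+j}^2)$ --- precisely the total clause-user power whose dependence on the number of unanimous clauses was already established in the proof of Theorem~\ref{thm_complexity}. The max-min threshold test then reduces directly to that counting argument, with no margin analysis needed. That mirror construction is the missing idea in your plan.
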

\begin{proof} {The basic idea of proving the strong NP-hardness of problem \eqref{problemmin} is to establish a polynomial time reduction from the MAX-2UNANIMITY problem to it. Since this proof is similar to the one of Theorem \ref{thm_complexity}, we just give the proof outline.} Given any instance of the MAX-2UNANIMITY problem with clauses $c_1,c_2,\ldots,c_m$ defined over Boolean variables $x_1,x_2,\ldots,x_n$
and an integer $M,$ we construct below a multi-user multi-carrier interference channel with $2n+2m+1$ users and $2$ subcarriers. In addition to ``variable user'' $i$ and ``auxiliary variable user'' $n+i$ for $i=1,2,\ldots,n$ and ``clause user'' $2n+j$ for $j=1,2,\ldots,m,$ we also construct ``auxiliary clause user'' $2n+m+j$ for $j=1,2,\ldots,m$ and ``super user'' $2n+2m+1.$ Hence, ${\K}=\{1,2,\ldots,2n+2m+1\}$ and $\N=\left\{1,2\right\}.$

{The difference between the above constructed setup and the one constructed in Theorem \ref{thm_complexity} is $m$ ``auxiliary clause users'' and $1$ ``super user''. The purpose of constructing these $m+1$ users is to establish the reduction from the MAX-2UNANIMITY problem to problem \eqref{problemmin}. More specifically, we construct channel parameters of all ``auxiliary clause users'' such that they and those of ``clause users'' are symmetric, i.e., if ``clause user'' $2n+j$ suffers interferences from ``variable user'' $i_1$ (``auxiliary variable user'' $n+i_1$) on subcarrier $1$ and ``auxiliary variable user'' $n+i_2$ (``variable user'' $i_2$) on subcarrier $2,$ then ``auxiliary clause user'' $2n+m+j$ suffers interferences from ``auxiliary variable user'' $n+i_1$ (``variable user'' $i_1$) on subcarrier $1$ and ``variable user'' $i_2$ (``auxiliary variable user'' $n+i_2$) on subcarrier $2.$ Then, we construct channel parameters of the ``super user'' such that all ``clause users'' and all ``auxiliary clause users'' cause equal interferences to the ``super user'' on two subcarriers. In this way, since channel parameters of all ``clause users'' and ``auxiliary clause users'' are symmetric, the ``super user'' can achieve a higher transmission rate if and only if ``clause users'' transmit less total power and cause less total interferences to it. Furthermore, it follows from the proof of Theorem \ref{thm_complexity} that ``clause users'' transmit less total power if and only if more clauses are satisfied unanimously in the given MAX-2UNANIMITY instance. This finishes the desirable reduction.}

{We construct a special interference channel with $2n+m$ users and $2$ subcarriers such that transmission rate expressions of all users take the following forms}:
for $i=1,2,\ldots,n,$ $R_{i}$ and $R_{n+i}$ are the same as the ones in \eqref{variablerate1} and \eqref{variablerate2}, respectively; for $j=1,2,\ldots,m,$ $R_{2n+j}$ is the same as the one in \eqref{clauserate} and 
{\begin{equation}\label{clauserate-min2}\!\!\!\!\!\!\!\!R_{2n+m+j}=\left\{\!\!\!\!\!\!\!\!
  \begin{array}{ll}
&\displaystyle \ln\left(1+\frac{p_{2n+m+j}^1}{1+p_{n+i_1}^1+p_{n+i_2}^1}\right)+\ln\left(1+\frac{p_{2n+m+j}^2}{1+p_{n+i_1}^2+p_{n+i_2}^2}\right),\,\text{if}\,c_j=x_{i_1}\vee x_{i_2};\\[15pt]
&\displaystyle \ln\left(1+\frac{p_{2n+m+j}^1}{1+p_{n+i_1}^1+p_{i_2}^1}\right)+\ln\left(1+\frac{p_{2n+m+j}^2}{1+p_{n+i_1}^2+p_{i_2}^2}\right),\,\text{if}\,c_j=x_{i_1}\vee \bar x_{i_2};\\[15pt]
&\displaystyle \ln\left(1+\frac{p_{2n+m+j}^1}{1+p_{i_1}^1+p_{n+i_2}^1}\right)+\ln\left(1+\frac{p_{2n+m+j}^2}{1+p_{i_1}^2+p_{n+i_2}^2}\right),\,\text{if}\,c_j=\bar x_{i_1}\vee x_{i_2};\\[15pt]
&\displaystyle \ln\left(1+\frac{p_{2n+m+j}^1}{1+p_{i_1}^1+p_{i_2}^1}\right)+\ln\left(1+\frac{p_{2n+m+j}^2}{1+p_{i_1}^2+p_{i_2}^2}\right),\,\text{if}\,c_j=\bar x_{i_1}\vee \bar x_{i_2};
  \end{array}\right.
\end{equation}}
and \begin{equation}\label{superclause}
  R_{2n+2m+1}=\ln\left(1+\frac{p_{2n+2m+1}^1}{1+\displaystyle\sum_{j=1}^{2m} p_{2n+j}^1}\right)+\ln\left(1+\frac{p_{2n+2m+1}^2}{1+\displaystyle\sum_{j=1}^{2m} p_{2n+j}^2}\right).
\end{equation} Moreover, let
\begin{equation}\label{budget}\bar p_{k}=\left\{\!\!\!\!\!\!
  \begin{array}{ll}
&\displaystyle 1,~\text{if}~1\leq k\leq 2n;\\[5pt]
&\displaystyle 4\left(\sqrt{2}-1\right),~\text{if}~2n+1\leq k\leq 2n+2m;\\[5pt]
&\displaystyle 2\left(\sqrt{2}-1\right)\left(1+2n+M+4(\sqrt{2}-1)(m-M)\right),~\text{if}~k=2n+2m+1.
  \end{array}\right.
\end{equation}
Then, the constructed instance of problem \eqref{problemmin} is
\begin{equation}\label{constructmin}
 \begin{array}{cl}
\displaystyle \max_{\left\{p_k^n\right\}} & \displaystyle \min_{k\in\K}\left\{R_{k}\right\}\\[5pt]
\text{s.t.} & \displaystyle p_k^1+p_k^2\leq 1,~k=1,2,\ldots,2n, \\[3pt]
            & \displaystyle p_k^1+p_k^2\leq 4\left(\sqrt{2}-1\right),~k=2n+1,2n+2,\ldots,2n+2m,\\[1pt]
            & p_{2n+2m+1}^1+p_{2n+2m+1}^2\leq 2\left(\sqrt{2}-1\right)\left(1+2n+M+4(\sqrt{2}-1)(m-M)\right),\\
        & p_k^1\geq 0,~p_k^2\geq 0,~k\in\K,
    \end{array}
\end{equation}where $R_k$ are given in \eqref{variablerate1}, \eqref{variablerate2}, \eqref{clauserate}, \eqref{clauserate-min2}, and \eqref{superclause}. {By using the similar argument as the one in Theorem \ref{thm_complexity}, we can show that the transformation from the MAX-2UNANIMITY problem to problem \eqref{constructmin} can be finished in polynomial time.} Next, we show that there exists a truth assignment such that at least $M$
clauses are satisfied unanimously for the given MAX-2UNANIMITY
instance if and only if the optimal value of problem \eqref{constructmin} is greater than or equal to $1.$


%

For any $j=1,2,\ldots,m,$ since $R_{2n+j}$ in \eqref{clauserate} and $R_{2n+m+j}$ in \eqref{clauserate-min2} are symmetric, it follows that $p_{2n+j}^1=p_{2n+m+j}^2$ and $p_{2n+m+j}^1=p_{2n+j}^2$ at the optimal solutions of problem \eqref{constructmin}. This shows that $$\sum_{j=1}^{2m} p_{2n+j}^1=\sum_{j=1}^{2m} p_{2n+j}^2=\sum_{j=1}^{m} p_{2n+j}^1+\sum_{j=1}^{m} p_{2n+j}^2$$ holds at the optimal solutions of problem \eqref{constructmin}, which, together with the fact $$\bar p_{2n+2m+1}=2\left(\sqrt{2}-1\right)\left(1+2n+M+4(\sqrt{2}-1)(m-M)\right),$$ implies that the optimal value of problem \eqref{constructmin} is greater than or equal to $1$ if and only if $$\displaystyle\sum_{j=1}^{m} p_{2n+j}^1+\sum_{j=1}^{m} p_{2n+j}^2$$ at the optimal solution is less than or equal to $2n+M+4(\sqrt{2}-1)(m-M).$ Furthermore, since the latter is true if and only if there exists a truth assignment such that $M$ clauses in the given
MAX-2UNANIMITY instance are unanimous (cf. Theorem \ref{thm_complexity}), we conclude that checking the optimal value of problem \eqref{constructmin} is greater than or equal to $1$ is strongly NP-hard. Therefore, problem \eqref{problemmin} is strongly NP-hard.
\end{proof}

{Three remarks are in order. First, although Theorems \ref{thm_complexity} and \ref{thm-min} concentrate on the strong NP-hardness of problems \eqref{problem} and \eqref{problemmin} where $N=2,$ it is simple to use the same arguments to show the strong NP-hardness of more general problems \eqref{problem} and \eqref{problemmin} where $N\geq 2.$ Table \ref{table} summarizes the complexity status of spectrum management problems \eqref{problem} and \eqref{problemmin}. Second, although this paper focuses on spectrum management problems \eqref{problem} and \eqref{problemmin}, the developed techniques can be applied to show the NP-hardness of other related optimization problems {involving ``two''} arising from signal processing and wireless communications. For instance, the similar techniques have been used in \cite{liu11tspbeamforming} to show the NP-hardness of the harmonic-mean maximization problem under individual power constraints in the multi-user single-subcarrier multi-input single-output interference channel where each transmitter is equipped with two (or more) antennas.} {In Section \ref{sec-extension}, we shall also extend the techniques to show the strong NP-hardness of the linear transceiver design problem in the multi-user single-subcarrier multi-input multi-output (MIMO) interference channel where all transmitters and receivers are equipped with two (or more) antennas.} {Finally, we were drawn attention to the work \cite{locatelli} after the submission of this paper. In \cite{locatelli}, the authors showed that problem \eqref{problemmin} is NP-hard when $N=2$ based on a polynomial time reduction from the partition problem \cite{garey79complexity}. In contrast, we show in this paper that problem \eqref{problemmin} is strongly NP-hard when $N=2$ based on a polynomial time reduction from the MAX-2UNANIMITY~problem. Therefore, our proof technique is different from the one in \cite{locatelli} and our result is more stronger.}



\begin{table*}[!ht]
\caption{\textsc{Complexity Status of Spectrum Management for Multi-User Multi-Subcarrier Communication Systems}} \label{table} \centering
\begin{tabular}[h]{|c|c|c|}
\hline
  \hline
        {\backslashbox{\# of Subcarriers}{Problem}} &
{Total Power Minimization Problem \eqref{problem}} &  {Min-Rate Maximization Problem \eqref{problemmin}}
\\\hline
        {$N=1$}  &   {Polynomial\ Time Solvable \cite{luo08jstpdynamic}}        &     {Polynomial\ Time Solvable \cite{luo08jstpdynamic}}

\\\hline
        {$N\geq2$}    &        {Strongly NP-hard} (Theorem \ref{thm_complexity})   &    {Strongly NP-hard} (Theorem \ref{thm-min})    \\\hline
        \hline
\end{tabular}
\end{table*}




{\subsection{Complexity Analysis of Linear Transceiver Design Problems}\label{sec-extension}

In this subsection, we first introduce two formulations of the linear transceiver design problem in the multi-user single-carrier MIMO interference channel and then apply our previously developed techniques to show that both of the problems are strongly NP-hard when all transmitters and receivers are equipped with two antennas.

Consider a $K$-user single-carrier MIMO interference channel where the $k$-th
transmitter and receiver are equipped with $N_k$ and $M_k$ antennas,
respectively. The received signal at receiver $k$ is
\begin{equation}\nonumber
  \by_k=\bH_{k,k}\bv_ks_k+\sum_{j\neq k}\bH_{k,j}\bv_js_j+\bz_k,
\end{equation}
where $\bH_{k,j}\in\mathbb{C}^{M_k\times N_j}$ is the channel matrix
from transmitter $j$ to receiver $k$, $\bv_k\in\mathbb{C}^{N_k\times
1}$ is the beamformer used by transmitter $k$, $s_k\in\mathbb{C}$ is
the symbol that transmitter $k$ wishes to send to receiver $k$, and
$\bz_k\in\mathbb{C}^{M_k\times 1}$ is the additive white Gaussian
noise (AWGN) with distribution $\cal{CN}$$(\mathbf{0},
\eta_k\bI).$ Each receiver uses a linear receive strategy and let
$\bu_k\in\mathbb{C}^{M_k\times 1}$ be the receive beamformer of receiver $k$.
Then, the linearly processed signal at the $k$-th receiver is
\begin{equation}\nonumber
  \hat s_k=\bu_k^\dagger\by_k,
\end{equation} where $\left(\cdot\right)^{\dagger}$ denotes the Hermitian operator.
Treating interference as noise, we can write the SINR of user $k$ as
\begin{equation*}\label{SINR}
\displaystyle
\SI_k=\frac{|\bu_k^\dagger\bH_{k,k}\bv_k|^2}{\eta_k\|\bu_k\|^2+\displaystyle\sum_{j\neq
k}|\bu_k^\dagger\bH_{k,j}\bv_j|^2}.
\end{equation*} We consider the following two formulations of the linear transceiver design problem:
\begin{equation}\label{min-power-uv}
\begin{array}{cl}
\displaystyle \max_{\left\{\bu_k,\,\bv_k\right\}} &\displaystyle \sum_{k\in\cal K} \|\bv_k\|^2 \\
[5pt] \mbox{s.t.} & \SI_k\geq \gamma_k,~\|\bu_k\|^2=1,\,k\in \cal K,
\end{array}
\end{equation}
and
\begin{equation}\label{max-min-uv}
\begin{array}{cl}
\displaystyle \max_{\left\{\bu_k,\,\bv_k\right\}} &\displaystyle \min_{k\in\cal K}\left\{\SI_k\right\} \\
[5pt] \mbox{s.t.} &
\|\bu_k\|^2=1,~\|\bv_k\|^2\le \bar p_k,\,k\in \cal K,
\end{array}
\end{equation}where $\gamma_k$ is the desired SINR target of user $k$ and $\bar p_k$ is the power budget of transmitter $k.$ Notice that the norm of all receive beamformers is normalized to be one in problems \eqref{min-power-uv} and \eqref{max-min-uv}.


We have the following strong NP-hardness results.

\begin{dingli}\label{thm-min-power-uv}
Problem \eqref{min-power-uv} is strongly NP-hard when $M_k=N_k=2$ for all $k\in\K.$
\end{dingli}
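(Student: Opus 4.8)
The plan is to reduce from the MAX-2UNANIMITY problem, following the same strategy that drove the proofs of Theorems~\ref{thm_complexity} and~\ref{thm-min}, but now translating the ``two subcarrier'' structure into a ``two antenna'' structure. Given an instance of MAX-2UNANIMITY with clauses $c_1,\ldots,c_m$ over variables $x_1,\ldots,x_n$ and an integer $M$, I would construct a $K$-user single-carrier MIMO interference channel with $M_k=N_k=2$ for every user, where the role previously played by the two subcarriers is now played by the two dimensions available to each beamformer $\bv_k\in\mathbb{C}^{2\times 1}$. The first step is to exhibit, for each variable, a pair of users (``variable user'' and ``auxiliary variable user'') whose channel matrices force a binary ``which antenna direction carries the power'' decision in order to meet their SINR targets $\gamma_k$, exactly mirroring the role of $p_k^1$ versus $p_k^2$ in \eqref{constructed}. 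Concretely, I would choose the direct channels $\bH_{k,k}$ and the crosstalk channels $\bH_{k,j}$ to be diagonal (e.g.\ built from $\bI$ and the two standard basis projections), so that the SINR expression \eqref{SINR} decouples along the two antenna directions and reproduces the two-term rate/SINR sums of \eqref{variablerate1}--\eqref{clauserate}.

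The second step is to construct the ``clause users''. For each clause $c_j$ I would build channel matrices so that clause user $2n+j$ sees interference only from the two variable or auxiliary-variable users dictated by the literals in $c_j$, with the interference landing on the appropriate antenna direction. The key quantitative point to replicate is the gap established in Theorem~\ref{thm_complexity}: when a clause is unanimous the interference concentrates so that a total beamformer power of $1$ suffices to reach the SINR target, whereas when it is not unanimous the interference splits evenly across the two directions and a strictly larger power $4(\sqrt{2}-1)$ is required. I would then set $\gamma_k$ uniformly (the analogue of $\gamma_k=\ln 2$, i.e.\ $\gamma_k=1$ under the SINR formulation), and the objective $\sum_k\|\bv_k\|^2$ in \eqref{min-power-uv} becomes the exact analogue of the total transmit power $\sum_k\sum_n p_k^n$ in \eqref{problem}. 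This makes the optimal value of \eqref{min-power-uv} equal to $2n+M+4(\sqrt{2}-1)(m-M)$ precisely when at least $M$ clauses can be made unanimous, giving the ``if and only if'' in both directions by the same argument as before.

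The main obstacle, and the step requiring genuine care rather than a verbatim translation, is handling the receive beamformers $\bu_k$, which have no counterpart in the subcarrier model. In problems \eqref{problem} and \eqref{problemmin} the two subcarriers are physically orthogonal and cannot leak into one another; here the receiver is free to choose any unit vector $\bu_k\in\mathbb{C}^{2\times1}$, and a poor choice could in principle combine signal and interference across the two antenna directions and break the clean decoupling. The plan is to arrange the channel matrices (keeping them diagonal with well-separated direct and cross gains on the two coordinates) so that the SINR-maximizing choice of $\bu_k$ is forced to be one of the two standard basis vectors, thereby restoring the two-subcarrier decoupling at optimality. I would verify via the optimal-receiver (MMSE/maximum-ratio) structure that, for the constructed channels, any optimal $(\bu_k,\bv_k)$ pair aligns $\bv_k$ and the effective $\bu_k$ along a single coordinate, so that the SINR expression \eqref{SINR} collapses to the two-term sums used above. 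Once this alignment lemma is in place, the remaining steps---showing the transformation is polynomial in $n,m$, and the two-directional equivalence between unanimous clauses and the target objective value---follow exactly as in the proof of Theorem~\ref{thm_complexity}, and strong NP-hardness of \eqref{min-power-uv} for $M_k=N_k=2$ follows from Lemma~\ref{MAX-Full}.
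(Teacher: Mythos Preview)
Your high-level plan (reduce from MAX-2UNANIMITY, build variable gadgets that force a binary choice, then clause gadgets whose minimum power depends on unanimity) is the right one, but the concrete mechanism you propose would not go through, for two linked reasons.

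First, the single-stream MIMO SINR in \eqref{min-power-uv} is one ratio
\[
\SI_k=\frac{|\bu_k^\dagger\bH_{k,k}\bv_k|^2}{\eta_k\|\bu_k\|^2+\sum_{j\neq k}|\bu_k^\dagger\bH_{k,j}\bv_j|^2},
\]
not a sum of two per-coordinate terms; there is no way to make it ``reproduce the two-term rate/SINR sums of \eqref{variablerate1}--\eqref{clauserate}'' by taking diagonal channels. Consequently the numbers you import from Theorem~\ref{thm_complexity} (threshold $2n+M+4(\sqrt{2}-1)(m-M)$, clause powers $1$ versus $4(\sqrt{2}-1)$) have no reason to appear here. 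Second, and more seriously, your ``alignment lemma'' fails as stated: with only two users per variable and identity/diagonal channels, the gadget is invariant under a common unitary rotation of all beamformers, so the SINR constraints can only force $\bv_i\perp\bv_{n+i}$, not $\bv_i\simeq\be_1$ or $\bv_i\simeq\be_2$. Any orthonormal pair works, and then the clause-user interference need not land cleanly on a single coordinate, which destroys the unanimous/non-unanimous power gap.

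The paper breaks this rotational symmetry by a genuinely different gadget: it uses \emph{three} users per variable (so $3n+m$ users, not $2n+m$) and the \emph{non-diagonal} matrices $\bH_B=\left(\begin{smallmatrix}0&1\\1&0\end{smallmatrix}\right)$ and $\bH_C=\left(\begin{smallmatrix}0&1\\0&0\end{smallmatrix}\right)$. Lemma~\ref{lemma_small_power_uv} shows that feasibility of the three coupled SINR constraints at total power $3$ forces $\bu_i\simeq\bv_i$, orthogonality to $\bH_B\bv_{n+i}$ and $\bH_B\bv_{2n+i}$, and finally, via the $\bH_C$ constraint, $v_{i1}\bar v_{i2}=0$, i.e.\ $\bv_i\simeq\be_1$ or $\bv_i\simeq\be_2$. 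With this rigidity in hand, the clause users (using $\bH_A$ for a positive literal and $\bH_B$ for a negated one) need power $1$ when the clause is unanimous and power $2$ otherwise, giving the threshold $3n+2m-M$. The missing ingredient in your plan is precisely this symmetry-breaking three-user gadget with a non-diagonal (indeed nilpotent) crosstalk matrix; diagonal channels alone cannot do the job.
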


The detailed proof of Theorem \ref{thm-min-power-uv} can be found in Appendix \ref{app:lemma-uv}. By using the similar argument in the proof of Theorem \ref{thm-min-power-uv} and the similar technique as in the proof of Theorem \ref{thm-min}, we can show the following Theorem \ref{thm-max-min-uv}. We leave the proof of Theorem \ref{thm-max-min-uv} as an exercise for the interested readers. 

\begin{dingli}\label{thm-max-min-uv}
Problem \eqref{max-min-uv} is strongly NP-hard when $M_k=N_k=2$ for all $k\in\K.$
\end{dingli}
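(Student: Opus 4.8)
The plan is to combine the two-antenna gadget underlying the proof of Theorem~\ref{thm-min-power-uv} with the super-user symmetrization trick from the proof of Theorem~\ref{thm-min}. Starting from an instance of MAX-2UNANIMITY with clauses $c_1,\ldots,c_m$ over variables $x_1,\ldots,x_n$ and integer $M$, I would first invoke the construction behind Theorem~\ref{thm-min-power-uv}: this yields a MIMO interference channel with $M_k=N_k=2$ consisting of variable users, auxiliary variable users, and clause users, in which the two transmit antennas play the role of the two subcarriers of Theorem~\ref{thm_complexity}. The channel matrices $\bH_{k,j}$ are chosen so that, at any transceiver configuration meeting the SINR targets, each clause user must spend transmit power $\|\bv_{2n+j}\|^2$ equal to $1$ when its clause is unanimous and strictly more (the analogue of $4(\sqrt{2}-1)$) when it is not, exactly reproducing the feasibility dichotomy of Theorem~\ref{thm_complexity}.

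Next, mirroring the proof of Theorem~\ref{thm-min}, I would augment this channel with $m$ auxiliary clause users and a single super user, giving $K=2n+2m+1$ users in total. Auxiliary clause user $2n+m+j$ is assigned channel matrices symmetric to those of clause user $2n+j$, obtained by swapping the roles of the variable and auxiliary variable users and of the two spatial dimensions, so that at any optimal solution the transmit beamformers of the pair are mirror images and their aggregate interference onto the super user is balanced across its two receive directions. The matrices $\bH_{2n+2m+1,\,2n+j}$ coupling every clause and auxiliary clause user to the super user are chosen so that each contributes equally to the super user's interference, which makes $\SI_{2n+2m+1}$ monotonically decreasing in the total clause-user power $\sum_{j=1}^{2m}\|\bv_{2n+j}\|^2$.

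With the SINR targets $\gamma_k$ of the variable, auxiliary-variable, clause, and auxiliary-clause users set so that their constraints reproduce the feasibility structure of Theorem~\ref{thm-min-power-uv}, I would calibrate the super user's power budget $\bar p_{2n+2m+1}$ to the MIMO analogue of \eqref{budget} so that $\SI_{2n+2m+1}\ge 1$ holds if and only if the clause users' total power does not exceed the threshold $2n+M+4(\sqrt{2}-1)(m-M)$. By the clause/auxiliary-clause symmetry and the analysis of Theorem~\ref{thm_complexity}, this threshold is met if and only if at least $M$ clauses can be made unanimous. Hence the optimal value of \eqref{max-min-uv} is at least $1$ if and only if the MAX-2UNANIMITY instance is a yes-instance; since the channel has a fixed antenna number and only $2n+2m+1$ users, the reduction is polynomial in $n$ and $m$, and strong NP-hardness of \eqref{max-min-uv} follows from Lemma~\ref{MAX-Full}.

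The hard part will be establishing the MIMO analogue of the symmetry identity $p_{2n+j}^1=p_{2n+m+j}^2$ that drives the proof of Theorem~\ref{thm-min}. In the single-carrier MIMO model there is no explicit subcarrier index, so the two transmit and two receive antennas must be forced to behave like two orthogonal dimensions, and one must verify that the freedom in selecting the unit-norm receive beamformers $\bu_k$ cannot be exploited to evade the intended interference pattern. Concretely, I expect to first pin down the optimal receive beamformers exactly as in the proof of Theorem~\ref{thm-min-power-uv}, and then argue that at optimality the clause and auxiliary-clause transmit beamformers are driven into the mirror-symmetric configuration that equalizes the super user's interference across its two receive directions; only once that is secured does the clean equivalence ``$\SI_{2n+2m+1}\ge 1$ if and only if the total clause power is at most the threshold'' go through.
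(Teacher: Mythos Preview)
Your approach is exactly what the paper indicates: it does not spell out a proof but simply says to combine the argument of Theorem~\ref{thm-min-power-uv} with the super-user symmetrization technique from Theorem~\ref{thm-min}, leaving the details as an exercise. Your plan of augmenting the MIMO gadget with $m$ auxiliary clause users and one super user, then calibrating the super user's power budget so that its SINR crosses $1$ precisely at the clause-power threshold, is the intended route.

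One bookkeeping correction: the construction behind Theorem~\ref{thm-min-power-uv} uses \emph{three} users per Boolean variable (users $i$, $n+i$, $2n+i$; cf.\ Lemma~\ref{lemma_small_power_uv}), not two, so the clause users are indexed $3n+j$ and the total user count after augmentation is $3n+2m+1$, not $2n+2m+1$. Correspondingly, in the MIMO gadget the non-unanimous clause power is $2$ (not $4(\sqrt{2}-1)$), so the relevant threshold is $3n+2m-M$ rather than $2n+M+4(\sqrt{2}-1)(m-M)$. These are just index and constant slips imported from the multi-carrier proof; once you use the actual numbers from Theorem~\ref{thm-min-power-uv}, the rest of your outline goes through unchanged.
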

}

{Notice that the same strong NP-hardness results in Theorems \ref{thm-min-power-uv} and \ref{thm-max-min-uv} hold true for more general problems \eqref{min-power-uv} and \eqref{max-min-uv} where $\min\left\{M_k,N_k\right\}\geq 2$ for all $\K,$ although Theorems \ref{thm-min-power-uv} and \ref{thm-max-min-uv} concentrate on the special case where $M_k=N_k=2$ for all $\K.$ Table \ref{table-uv} summarizes the complexity status of linear transceiver design problems \eqref{min-power-uv} and \eqref{max-min-uv}. 
Next, we give some remarks on the complexity of max-min fairness linear transceiver design problem \eqref{max-min-uv}.}


{Problem \eqref{max-min-uv} is shown to be strongly NP-hard in \cite{liu2011maxicc,liu2013max} when $\min\left\{M_k,N_k\right\}\geq 2$ and $M_k+N_k\geq 5$ for all $k\in\K.$ The proof in \cite{liu2011maxicc,liu2013max} is based on a polynomial time reduction from the 3-SATISFIABILITY problem. Then, (a variant of) problem \eqref{max-min-uv} is shown to remain strongly NP-hard in \cite{meisam2013linear} when $\min\left\{M_k,N_k\right\}\geq 2$ for all $k\in\K.$ The proof in \cite{meisam2013linear} is based on a polynomial time reduction from the same NP-hard problem as the one used in \cite{liu2013max}. In this paper, we show the strong NP-hardness of problem \eqref{max-min-uv} when $\min\left\{M_k,N_k\right\}\geq 2$ for all $k\in\K$ by establishing a polynomial time reduction from the MAX2-UNANIMITY problem, which is different from the ones in \cite{liu2011maxicc,liu2013max,meisam2013linear}. Moreover, our NP-hardness result is stronger than the one in \cite{meisam2013linear}, since the result in \cite{meisam2013linear} holds true only for complex channel matrices while our result holds true for both complex and real channel matrices.} 

\begin{table*}[!ht]
\caption{\textsc{Complexity Status of Linear
Transceiver Design Problems \eqref{min-power-uv} and \eqref{max-min-uv}}} \label{table-uv} \centering
\begin{tabular}[h]{|c|c|c|}
\hline
  \hline
        {\backslashbox{\# of Rx Antennas}{\# of Tx Antennas}} &
{$N_k=1$} &  {$N_k\geq2$}
\\\hline
        {$M_k=1$}  &   {Polynomial\ Time Solvable} \cite{luo08jstpdynamic}       &     {Polynomial\ Time Solvable} \cite{wiesel2006linear,liu11tspbeamforming}
        \\\hline
        {$M_k\geq2$}  &   {Polynomial\ Time Solvable} \cite{liu2013maxsimo}       &     {Strongly NP-hard}  (\!\!\cite{meisam2013linear}, Theorems \ref{thm-min-power-uv} and \ref{thm-max-min-uv})
\\\hline
\end{tabular}
\end{table*}

%
%
%


\section{Conclusions}
Dynamic spectrum management in accordance with fast channel fluctuations can significantly improve spectral efficiency of the multi-user multi-carrier communication system. A major challenge associated with spectrum management is to find, for a given channel state, the globally optimal spectrum management strategy to minimize the total transmission power or maximize the system utility. This paper has provided a complete complexity characterization of the spectrum management problem in the multi-user multi-subcarrier communication system. We have shown that both the total power minimization problem and the min-rate maximization problem are strongly NP-hard when the number of subcarriers is two, and thus answered a long-standing open question in the literature. The complexity results suggest that there is no polynomial time algorithms which can solve the general spectrum management problem to global optimality (unless P$=$NP) and {it is more realistic to design efficient algorithms for finding an approximately optimal or locally optimal spectrum management strategy in polynomial time in practice.} {It is worthwhile pointing out that the developed techniques in this paper can potentially be extended to show the (strong) NP-hardness of other related optimization problems {involving ``two'' arising from} signal processing and wireless communications.} {Our future work is to design efficient approximation algorithms (with guaranteed approximation ratios) for solving the general dynamic spectrum management problem.}




\section*{Acknowledgment}
 The author thanks Professor Zhi-Quan (Tom) Luo of University of Minnesota and The Chinese University of Hong Kong (Shenzhen) and Professor Yu-Hong Dai of Chinese Academy of Sciences for many useful discussions on an early version of this paper. {The author also thanks the Associate Editor, Professor Osvaldo Simeone, and three anonymous reviewers for their constructive comments, which significantly improved the quality and presentation of the paper.}

\appendices

\section{A Lemma}\label{app:lemma}%
\begin{yinli}\label{lemma_small_power}
  The points $\left(p_1^1,p_1^2,p_2^1,p_2^2\right)^T=(1,0,0,1)^T$ and $\left(p_1^1,p_1^2,p_2^1,p_2^2\right)^T=(0,1,1,0)^T$ are the only feasible solutions of \begin{equation}\label{small_problem_feasibility}
 \left\{\!\!\!\!\!\!\begin{array}{cl}
 & \ln\left(1+\frac{p_1^1}{1+p_2^1}\right)+\ln\left(1+\frac{p_1^2}{1+p_2^2}\right)\geq \ln 2, \\[10pt]
           & \ln\left(1+\frac{p_2^1}{1+p_1^1}\right)+\ln\left(1+\frac{p_2^2}{1+p_1^2}\right)\geq \ln 2, \\[10pt]
          & p_1^1+p_1^2\leq 1,~p_2^1+p_2^2\leq 1, \\[3pt]
        & p_1^1\geq0,~p_1^2\geq 0,~p_2^1\geq 0,~p_2^2\geq 0.
    \end{array}\right.
\end{equation}
\end{yinli}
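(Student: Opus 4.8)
The plan is to reduce the two logarithmic rate constraints to polynomial inequalities and then exploit a cancellation. Writing $a=p_1^1$, $b=p_1^2$, $c=p_2^1$, $d=p_2^2$ for brevity, I would first use the strict monotonicity of $\ln$ to exponentiate each rate constraint. The first inequality in \eqref{small_problem_feasibility} becomes $\left(1+\frac{a}{1+c}\right)\left(1+\frac{b}{1+d}\right)\geq 2$, which after clearing denominators reads $(1+a+c)(1+b+d)\geq 2(1+c)(1+d)$; symmetrically, the second becomes $(1+a+c)(1+b+d)\geq 2(1+a)(1+b)$. The decisive structural observation is that both inequalities carry the \emph{same} left-hand side $(1+a+c)(1+b+d)$.

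The key step is to add these two polynomial inequalities. After expanding both sides, the terms involving $a+b$ and $c+d$ cancel against each other, as do the $ab$ and $cd$ terms, and what survives is the clean inequality $ad+bc\geq 1$. This is the crux of the argument: the two rate requirements, which individually look complicated, combine into a single transparent condition.

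Next I would confront $ad+bc\geq 1$ with the power budgets. Since $ac+bd\geq 0$, one always has $ad+bc\leq (a+b)(c+d)$, and the constraints $a+b\leq 1$, $c+d\leq 1$ give $(a+b)(c+d)\leq 1$. Hence $ad+bc\leq 1$, and combining with the lower bound forces equality throughout: $ac+bd=0$, $a+b=1$, and $c+d=1$. By nonnegativity, $ac+bd=0$ splits into $ac=0$ and $bd=0$.

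Finally, a short case analysis closes the proof. From $ac=0$ either $a=0$ or $c=0$, and from $bd=0$ either $b=0$ or $d=0$; the combinations $\{a=0,\,b=0\}$ and $\{c=0,\,d=0\}$ violate $a+b=1$ or $c+d=1$ and are discarded, while the two remaining combinations yield exactly $\left(p_1^1,p_1^2,p_2^1,p_2^2\right)^T=(0,1,1,0)^T$ and $(1,0,0,1)^T$. It then remains only to verify by direct substitution that both of these points satisfy the original individual constraints in \eqref{small_problem_feasibility}, establishing that they are indeed feasible and are the sole feasible solutions. The main obstacle is spotting the additive cancellation that produces $ad+bc\geq 1$; once that is in hand, matching it against the budget bound $ad+bc\leq(a+b)(c+d)\leq 1$ and running the case split are routine.
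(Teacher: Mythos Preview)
Your proof is correct and shares its crucial first half with the paper: both exponentiate the two rate constraints to polynomial form and add them to obtain $p_1^1 p_2^2 + p_1^2 p_2^1 \geq 1$. The second half, however, is genuinely different and more direct. The paper treats $p_1^1 p_2^2 + p_1^2 p_2^1 \geq 1$ as the constraint of a relaxed minimization problem with objective $p_1^1+p_1^2+p_2^1+p_2^2$, bounds the objective below via AM--GM through the auxiliary objective $2\bigl(\sqrt{p_1^1 p_2^2}+\sqrt{p_1^2 p_2^1}\bigr)$, shows the minimum is $2$ and is attained only at the two candidate points, and finally observes that the individual budgets $p_1^1+p_1^2\leq 1$, $p_2^1+p_2^2\leq 1$ are contained in the total-budget region $p_1^1+p_1^2+p_2^1+p_2^2\leq 2$. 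Your route bypasses this optimization detour entirely: the single inequality $ad+bc\leq (a+b)(c+d)\leq 1$ immediately forces equality throughout and exploits the individual budgets directly rather than through their sum. Your argument is shorter and more elementary; the paper's approach, on the other hand, yields as a by-product the characterization of optimal solutions to the relaxed problem \eqref{relax_small_problem}, which is not needed here but could be of independent interest.
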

 \emph{Proof of Lemma \ref{lemma_small_power}:} We first prove that $\left(p_1^1,p_1^2,p_2^1,p_2^2\right)^T=(1,0,0,1)^T$ and $\left(p_1^1,p_1^2,p_2^1,p_2^2\right)^T=(0,1,1,0)^T$ are the only two optimal solutions of problem
\begin{equation}\label{small_problem}
 \begin{array}{cl}
\displaystyle \min_{p_1^1,\,p_1^2,\,p_2^1,\,p_2^2} & \displaystyle p_1^1+p_1^2+p_2^1+p_2^2\\[10pt]
\text{s.t.} & \ln\left(1+\frac{p_1^1}{1+p_2^1}\right)+\ln\left(1+\frac{p_1^2}{1+p_2^2}\right)\geq \ln 2, \\[10pt]
           & \ln\left(1+\frac{p_2^1}{1+p_1^1}\right)+\ln\left(1+\frac{p_2^2}{1+p_1^2}\right)\geq \ln 2, \\[5pt]
        & p_1^1\geq0,~p_1^2\geq 0,~p_2^1\geq 0,~p_2^2\geq 0.
    \end{array}
\end{equation} The two rate constraints in problem \eqref{small_problem} can be equivalently rewritten as
  $$\left(1+p_1^1+p_2^1\right)\left(1+p_1^2+p_2^2\right)\geq 2\left(1+p_2^1\right)\left(1+p_2^2\right)$$
  and $$\left(1+p_1^1+p_2^1\right)\left(1+p_1^2+p_2^2\right)\geq 2\left(1+p_1^1\right)\left(1+p_1^2\right).$$ Adding the above two inequalities together yields $p_1^1p_2^2+p_1^2p_2^1\geq 1,$ which implies that problem
  \begin{equation}\label{relax_small_problem}
 \begin{array}{cl}
\displaystyle \min_{p_1^1,\,p_1^2,\,p_2^1,\,p_2^2} & \displaystyle p_1^1+p_1^2+p_2^1+p_2^2\\
\text{s.t.} & p_1^1p_2^2+p_1^2p_2^1\geq 1, \\
        & p_1^1\geq0,~p_1^2\geq 0,~p_2^1\geq 0,~p_2^2\geq 0,
    \end{array}
\end{equation} is a relaxation of problem \eqref{small_problem}. If we can show $\left(p_1^1,p_1^2,p_2^1,p_2^2\right)^T=(1,0,0,1)^T$ and $\left(p_1^1,p_1^2,p_2^1,p_2^2\right)^T=(0,1,1,0)^T$ are the only two optimal solutions of problem \eqref{relax_small_problem}, then they must be the only two optimal solutions of problem \eqref{small_problem}. This further implies that $\left(p_1^1,p_1^2,p_2^1,p_2^2\right)^T=(1,0,0,1)^T$ and $\left(p_1^1,p_1^2,p_2^1,p_2^2\right)^T=(0,1,1,0)^T$ are the only feasible points of
\begin{equation}\label{small_problem_feasibility2}
 \left\{\begin{array}{cl}
 & \ln\left(1+\frac{p_1^1}{1+p_2^1}\right)+\ln\left(1+\frac{p_1^2}{1+p_2^2}\right)\geq \ln 2, \\[10pt]
           & \ln\left(1+\frac{p_2^1}{1+p_1^1}\right)+\ln\left(1+\frac{p_2^2}{1+p_1^2}\right)\geq \ln 2, \\[10pt]
          & p_1^1+p_1^2+p_2^1+p_2^2\leq 2, \\[3pt]
        & p_1^1\geq0,~p_1^2\geq 0,~p_2^1\geq 0,~p_2^2\geq 0.
    \end{array}\right.
\end{equation} Since $$\left\{\left(p_1^1,p_1^2,p_2^1,p_2^2\right)\geq 0\,|\,p_1^1+p_1^2\leq 1,\,p_2^1+p_2^2\leq 1\right\}\subseteq\left\{\left(p_1^1,p_1^2,p_2^1,p_2^2\right)\geq 0\,|\,p_1^1+p_1^2+p_2^1+p_2^2\leq 2\right\},$$ it follows that $\left(p_1^1,p_1^2,p_2^1,p_2^2\right)^T=(1,0,0,1)^T$ and $\left(p_1^1,p_1^2,p_2^1,p_2^2\right)^T=(0,1,1,0)^T$ are the only feasible points of \eqref{small_problem_feasibility}.

It remains to prove that $\left(p_1^1,p_1^2,p_2^1,p_2^2\right)^T=(1,0,0,1)^T$ and $\left(p_1^1,p_1^2,p_2^1,p_2^2\right)^T=(0,1,1,0)^T$ are the only two optimal solutions of problem \eqref{relax_small_problem}. 
It can be verified that the optimal solution of the following problem  \begin{equation*}
 \begin{array}{cl}
\displaystyle \min_{p_1^1,\,p_1^2,\,p_2^1,\,p_2^2} & \displaystyle 2\left(\sqrt{p_1^1p_2^2}+\sqrt{p_1^2p_2^1}\right)\\
\text{s.t.} & p_1^1p_2^2+p_1^2p_2^1\geq 1, \\
        & p_1^1\geq0,~p_1^2\geq 0,~p_2^1\geq 0,~p_2^2\geq 0,
    \end{array}
\end{equation*} must satisfy $$p_1^1p_2^2=1,~p_1^2p_2^1=0$$ or $$p_1^1p_2^2=0,~p_1^2p_2^1=1,$$ and its optimal value is $2.$ Since $p_1^1+p_2^2\geq 2\sqrt{p_1^1p_2^2}$ and $p_1^2+p_2^1\geq 2\sqrt{p_1^2p_2^1}$ and the above two inequalities hold true with ``='' if and only if $p_1^1=p_2^2$ and $p_1^2=p_2^1,$ we conclude that the only points that achieve the objective value of problem \eqref{relax_small_problem} of being $2$ are
$\left(p_1^1,p_1^2,p_2^1,p_2^2\right)^T=(1,0,0,1)^T$ and $\left(p_1^1,p_1^2,p_2^1,p_2^2\right)^T=(0,1,1,0)^T.$ Hence, $\left(p_1^1,p_1^2,p_2^1,p_2^2\right)^T=(1,0,0,1)^T$ and $\left(p_1^1,p_1^2,p_2^1,p_2^2\right)^T=(0,1,1,0)^T$ are the only two optimal solutions of problem \eqref{relax_small_problem}. This completes the proof of Lemma \ref{lemma_small_power}.

\section{An Illustrative Example}\label{app-example}%
Consider the instance in Example \ref{example}.
Then there
are $12$ users in the constructed 2-carrier communication system,
including $4$ ``variable user" (denoted as user $1,2,3,4$),~$4$ ``auxiliary variable user" (denoted as user $5,6,7,8$), and $4$
``clause user" (denoted as user $9,10,11,12$). In this case, ${\cal
K}=\left\{1,2,\ldots,12\right\},$ and all users' rate expressions are given as follows:
\begin{align*}
   & R_{1}=\ln\left(1+\frac{p_{1}^1}{1+p_{5}^1}\right)+\ln\left(1+\frac{p_{1}^2}{1+p_{5}^2}\right),\\[3pt]
   & R_{2}=\ln\left(1+\frac{p_{2}^1}{1+p_{6}^1}\right)+\ln\left(1+\frac{p_{2}^2}{1+p_{6}^2}\right),\\[3pt]
    & R_{3}=\ln\left(1+\frac{p_{3}^1}{1+p_{7}^1}\right)+\ln\left(1+\frac{p_{3}^2}{1+p_{7}^2}\right),\\[3pt]
    & R_{4}=\ln\left(1+\frac{p_{4}^1}{1+p_{8}^1}\right)+\ln\left(1+\frac{p_{4}^2}{1+p_{8}^2}\right),\\[3pt]
  & R_{5}=\ln\left(1+\frac{p_{5}^1}{1+p_{1}^1}\right)+\ln\left(1+\frac{p_{5}^2}{1+p_{1}^2}\right),\\[3pt]
 & R_{6}=\ln\left(1+\frac{p_{6}^1}{1+p_{2}^1}\right)+\ln\left(1+\frac{p_{6}^2}{1+p_{2}^2}\right),\\[3pt]
  & R_{7}=\ln\left(1+\frac{p_{7}^1}{1+p_{3}^1}\right)+\ln\left(1+\frac{p_{7}^2}{1+p_{3}^2}\right),\\[3pt]
  & R_{8}=\ln\left(1+\frac{p_{8}^1}{1+p_{4}^1}\right)+\ln\left(1+\frac{p_{8}^2}{1+p_{4}^2}\right),\\[3pt]
 & R_{9}=\ln\left(1+\frac{p_{9}^1}{1+p_{1}^1+p_{6}^1}\right)+\ln\left(1+\frac{p_{9}^2}{1+p_{1}^2+p_{6}^2}\right),\\[3pt]
 & R_{10}=\ln\left(1+\frac{p_{10}^1}{1+p_{1}^1+p_{3}^1}\right)+\ln\left(1+\frac{p_{10}^2}{1+p_{1}^2+p_{3}^2}\right),\\[3pt]
 & R_{11}=\ln\left(1+\frac{p_{11}^1}{1+p_{6}^1+p_{8}^1}\right)+\ln\left(1+\frac{p_{11}^2}{1+p_{6}^2+p_{8}^2}\right),\\[3pt]
 & R_{12}=\ln\left(1+\frac{p_{12}^1}{1+p_{7}^1+p_{4}^1}\right)+\ln\left(1+\frac{p_{12}^2}{1+p_{7}^2+p_{4}^2}\right).
\end{align*}

{\section{Proof of Theorem \ref{thm-min-power-uv}}\label{app:lemma-uv}

To ease the presentation, we first define some notation. Let
\begin{equation}\nonumber
  \begin{array}{cc} \be_1 = \left(
                                \begin{array}{c}
                                  1 \\
                                  0 \\
                                \end{array}
                              \right),~\be_2 = \left(
                                \begin{array}{c}
                                  0 \\
                                  1 \\
                                \end{array}
                              \right),~\be = \left(
                                \begin{array}{c}
                                  1 \\
                                  1 \\
                                \end{array}
                              \right),\\[20pt]\bH_A=\left(
                                               \begin{array}{cc}
                                                 1 & 0 \\
                                                 0 & 1 \\
                                               \end{array}
                                             \right),~\bH_B=\left(
                                               \begin{array}{cc}
                                                 0 & 1 \\
                                                 1 & 0 \\
                                               \end{array}
                                             \right),~\bH_C=\left(
                                               \begin{array}{cc}
                                                 0 & 1 \\
                                                 0 & 0 \\
                                               \end{array}
                                             \right).
  \end{array}
\end{equation}
For any given two vectors $\bu$ and $\bv$ with $\|\bu\|=\|\bv\|=1,$ we denote $\bu\simeq\bv$ if there exists a scaler $r\in\mathbb{C}$ (with $|r|=1$) such that $\bu=r\bv.$

To show Theorem \ref{thm-min-power-uv}, we need the following lemma, which recognizes a discrete structure in the solution of a special instance of the decision version of problem \eqref{min-power-uv}.
\begin{yinli}\label{lemma_small_power_uv}
Consider the following problem instance \begin{equation}\label{small_problem_feasibility_2}
 \left\{\!\!\!\!\!\!\begin{array}{cl}
 & \left|\bu^{\dagger}_1\bH_A\bv_1\right|^2\geq 1 + \left|\bu^{\dagger}_1\bH_B\bv_2\right|^2+ \left|\bu^{\dagger}_1\bH_B\bv_3\right|^2, \\[5pt]
 & \left|\bu^{\dagger}_2\bH_A\bv_2\right|^2\geq 1 + \left|\bu^{\dagger}_2\bH_C\bv_3\right|^2, \\[5pt]
 & \left|\bu^{\dagger}_3\bH_A\bv_3\right|^2\geq 1 + \left|\bu^{\dagger}_3\bH_C\bv_2\right|^2, \\[5pt]
 & \left\|\bv_1\right\|^2+\left\|\bv_2\right\|^2+\left\|\bv_3\right\|^2\leq 3,\\[5pt]
  & \left\|\bu_1\right\|^2=1,\,\left\|\bu_2\right\|^2=1,\,\left\|\bu_3\right\|^2 = 1.
    \end{array}\right.
\end{equation}
  The points $\bu_1\simeq\bu_2\simeq\bu_3\simeq\bv_1\simeq\bv_2\simeq\bv_3\simeq\be_1$ and $\bu_1\simeq\bu_2\simeq\bu_3\simeq\bv_1\simeq\bv_2\simeq\bv_3\simeq\be_2$ are the only two feasible solutions of problem \eqref{small_problem_feasibility_2} (up to an arbitrary phase rotation).
\end{yinli}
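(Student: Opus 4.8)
The plan is to mirror the argument of Lemma~\ref{lemma_small_power}: first use a Cauchy--Schwarz bound together with the aggregate power budget to force every inequality in \eqref{small_problem_feasibility_2} to hold with equality, and then read off the resulting discrete structure from the algebraic form of the vanishing interference terms. The discrete structure of $\bH_A,\bH_B,\bH_C$ is what makes the final deduction clean.

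First I would exploit that $\bH_A$ is the identity and $\|\bu_k\|=1,$ so that $|\bu_k^\dagger\bH_A\bv_k|=|\bu_k^\dagger\bv_k|\le\|\bv_k\|$ by Cauchy--Schwarz. Substituting this into the three rate-type constraints and discarding the nonnegative interference terms on their right-hand sides yields $\|\bv_k\|^2\ge 1$ for $k=1,2,3.$ Summing these three bounds and comparing with the budget $\|\bv_1\|^2+\|\bv_2\|^2+\|\bv_3\|^2\le 3$ forces $\|\bv_k\|^2=1$ for each $k,$ and simultaneously forces every intermediate inequality to be tight. Tightness has two consequences: each Cauchy--Schwarz step is an equality, so $\bv_k\simeq\bu_k$; and every interference term must vanish, i.e. $\bu_1^\dagger\bH_B\bv_2=\bu_1^\dagger\bH_B\bv_3=0,$ $\bu_2^\dagger\bH_C\bv_3=0,$ and $\bu_3^\dagger\bH_C\bv_2=0.$

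Next I would substitute $\bv_k=\lambda_k\bu_k$ with $|\lambda_k|=1$ and write $\bu_k=(c_k,d_k)^\T.$ The unit-modulus phases $\lambda_k$ factor out of the bilinear interference terms, so the vanishing conditions reduce to conditions on $\bu_1,\bu_2,\bu_3$ alone. Using $\bH_B(c,d)^\T=(d,c)^\T$ and $\bH_C(c,d)^\T=(d,0)^\T,$ these become $\bar c_1 d_2+\bar d_1 c_2=0,$ $\bar c_1 d_3+\bar d_1 c_3=0,$ $\bar c_2 d_3=0,$ and $\bar c_3 d_2=0.$ The last two are disjunctions: $c_2=0$ or $d_3=0,$ and $c_3=0$ or $d_2=0.$ A short case analysis discards the two combinations that would force $\bu_2=\mathbf{0}$ or $\bu_3=\mathbf{0}$ (ruled out by $\|\bu_k\|=1$), leaving only $\{c_2=c_3=0\}$ or $\{d_2=d_3=0\}.$ Feeding either possibility back into $\bar c_1 d_2+\bar d_1 c_2=0$ pins down $\bu_1,$ giving all six vectors $\simeq\be_2$ in the first case and all $\simeq\be_1$ in the second; a direct check confirms both candidate points satisfy \eqref{small_problem_feasibility_2} with equality.

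The only genuinely delicate point is bookkeeping the complex phases: I must verify that $\bv_k\simeq\bu_k$ lets the scalar $\lambda_k$ factor cleanly out of each bilinear interference term, so that the vanishing conditions are truly conditions on the receive beamformers $\bu_k$ and are insensitive to the phase ambiguity built into $\simeq.$ Once that is in place, the Cauchy--Schwarz-plus-budget step does all the heavy lifting of collapsing the continuous feasible region, and the remaining case analysis on the $\bH_C$-induced disjunctions is elementary.
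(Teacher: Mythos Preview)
Your proposal is correct and follows essentially the same approach as the paper: both use Cauchy--Schwarz with the aggregate power budget to force $\|\bv_k\|=1,$ $\bu_k\simeq\bv_k,$ and vanishing interference, then extract the discrete structure algebraically. The only cosmetic difference is in the final deduction: the paper parametrizes $\bv_2,\bv_3$ as the (unique up to phase) unit vectors orthogonal to $\bH_B^\dagger\bv_1$ and substitutes into a single $\bH_C$-condition to get $v_{11}\bar v_{12}=0,$ whereas you use both $\bH_C$-conditions and a short case analysis on $(c_2,d_2,c_3,d_3)$; both routes are elementary and equally valid.
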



\emph{Proof of Lemma \ref{lemma_small_power_uv}:} We first prove that the necessary conditions for problem \eqref{small_problem_feasibility_2} being feasible are
  \begin{equation}\label{eq1}\left\|\bv_1\right\|=\left\|\bv_2\right\|=\left\|\bv_3\right\| = 1,\end{equation}
  \begin{equation}\label{eq2}\bu_1\simeq\bv_1,~\bu_2\simeq\bv_2,~\bu_3\simeq\bv_3,\end{equation}
  and
  \begin{equation}\label{eq3}\bu^{\dagger}_1\bH_B\bv_2=0,~\bu^{\dagger}_1\bH_B\bv_3=0,~\bu^{\dagger}_2\bH_C\bv_3=0.\end{equation}
We prove \eqref{eq1}, \eqref{eq2}, and \eqref{eq3} sequentially.

{Proof of \eqref{eq1}:} We prove \eqref{eq1} based on the contradiction principle. We divide the proof into two cases: Case 1 (there exists an $i=1,2,3$ such that $\|\bv_i\|< 1$) and Case 2 (there exists an $i=1,2,3$ such that $\|\bv_i\|> 1$).
\begin{itemize}
  \item [-] Case 1: Without loss of generality, suppose that $\|\bv_1\|< 1$. This, together with the fact $\|\bu_1\|=1,$ implies $\left|\bu^{\dagger}_1\bH_A\bv_1\right|^2<1,$ which contradicts the first equation of \eqref{small_problem_feasibility_2}. Consequently, we must have $\|\bv_i\|\geq 1$ for all $i=1,2,3.$ Combining this with the fourth equation of \eqref{small_problem_feasibility_2}, we immediately obtain \eqref{eq1}.
  \item [-] Case 2: Without loss of generality, suppose that $\|\bv_1\|> 1.$ Then, by the fourth equation of \eqref{small_problem_feasibility_2}, we must have $\|\bv_2\|< 1$ or $\|\bv_3\|< 1$. This reduces to Case 1 and therefore \eqref{eq1} is true.
\end{itemize}

{Proof of \eqref{eq2}:} It follows from the first three conditions of \eqref{small_problem_feasibility_2} that $\left|\bu_i^{\dagger}\bv_i\right|\geq 1$ for all $i=1,2,3.$ This and the facts $\|\bu_i\|=\|\bv_i\|=1$ for all $i=1,2,3$ imply \eqref{eq2}. 

{Proof of \eqref{eq3}:} By \eqref{eq1}, \eqref{eq2}, and the last condition of \eqref{small_problem_feasibility_2}, we have $\left|\bu^{\dagger}_i\bH_A\bv_i\right|=1$ for all $i=1,2,3.$ This, together with the first three conditions in \eqref{small_problem_feasibility_2}, immediately implies \eqref{eq3}.


Next, we show the truth of the lemma based on \eqref{eq1}, \eqref{eq2}, and \eqref{eq3}. We focus on showing that $\bv_1\simeq\be_1$ or $\bv_1\simeq\be_2.$ Based on this, we can immediately obtain the other results in the lemma by using \eqref{eq2} and \eqref{eq3}. Let $$\bv_1=\left(
               \begin{array}{c}
                 v_{11} \\
                 v_{12} \\
               \end{array}
             \right)
  $$ with $\|\bv_{1}\|=1.$ 
  It suffices to prove that either $v_{11}=0$ or $v_{12}=0.$ Combining \eqref{eq2} and \eqref{eq3} yields
  \begin{equation}\label{alignment}\bv^{\dagger}_1\bH_B\bv_2=0,~\bv^{\dagger}_1\bH_B\bv_3=0,~\bv^{\dagger}_2\bH_C\bv_3=0.\end{equation}
  By the first two conditions of \eqref{alignment}, we obtain
  \begin{equation}\label{v2v3}\bv_2\simeq \bH_B \left(
                      \begin{array}{c}
                        \bar v_{12} \\
                        - \bar v_{11} \\
                      \end{array}
                    \right)~\text{and}~\bv_3\simeq \bH_B \left(
                      \begin{array}{c}
                        \bar v_{12} \\
                        - \bar v_{11} \\
                      \end{array}
                    \right),
  \end{equation} where $\bar v$ denotes the conjugate of $v\in\mathbb{C}.$ 
  Substituting \eqref{v2v3} into the third condition of \eqref{alignment}, we get
  $$\left(
                      \begin{array}{c}
                        \bar v_{12} \\
                        - \bar v_{11} \\
                      \end{array}
                    \right)^{\dagger} \bH_B \bH_C \bH_B\left(
                      \begin{array}{c}
                        \bar v_{12} \\
                        - \bar v_{11} \\
                      \end{array}
                    \right)=0,
  $$ which is equivalent to $$v_{11} \bar v_{12} = 0.$$
  This shows that either $v_{11}$ or $v_{12}$ is zero. Therefore, we get either $\bv_1\simeq\be_1$ or $\bv_1\simeq\be_2.$ The proof of Lemma \ref{lemma_small_power_uv} is completed.

We are now ready to prove Theorem \ref{thm-min-power-uv}.

\emph{Proof of Theorem \ref{thm-min-power-uv}:} Given any instance of the MAX-2UNANIMITY problem with clauses
$c_1,c_2,\ldots,c_m$ defined over Boolean variables $x_1,x_2,\ldots,x_n$
and an integer $M,$ we construct below a multi-user MIMO interference channel with $3n+m$ users, where the Boolean variable $x_i~(i=1,2,\ldots,n)$ corresponds to three users, including users $i, n+i,$ and $2n+i$; each clause $c_j~(j=1,2,\ldots,m)$ corresponds to user $3n+j$. 
Hence, ${\K}=\{1,2,\ldots,3n+m\}.$ 

{We now construct the direct-link and crosstalk channel matrices for all $3n+m$ users. All the direct-link channel matrices are set to be $$\bH_{k,k}=\bH_A,~k\in\K.$$
The corresponding crosstalk channel matrices are: for user $k,~k=1,2,...,n$,
set $$\bH_{k,n+k}=\bH_{k,2n+k}=\bH_B~\text{and}~\bH_{k,j}=\bm{0},~\forall~k\in{\cal K}\setminus\left\{k, n+k,2n+k\right\};$$
for user $k,~k=n+1,n+2,\ldots,2n,$ set
$$\bH_{k,n+k}=\bH_C~\text{and}~\bH_{k,j}=\bm{0},~\forall~k\in{\cal K}\setminus\left\{k, n+k\right\};$$
for user $k,~k=2n+1,2n+2,\ldots,3n,$ set
$$\bH_{k,k-n}=\bH_C~\text{and}~\bH_{k,j}=\bm{0},~\forall~k\in{\cal K}\setminus\left\{k-n, k\right\};$$
and for user $k,~k=3n+1, 3n+2,...,3n+m$, $\bH_{k,j}=\mathbf{0}$ for all ${\cal K}\setminus \left\{k\right\}$ except
\begin{equation}\label{matrix}\begin{array}{l} \bH_{k,j}=\left\{
\begin{array}{ll}\bH_A,&\mbox{if $\alpha_{\pi(k)}=x_j$ for some $j$;}\\
\bH_{B},&\mbox{if $\alpha_{\pi(k)}={\bar x}_j$ for some
$j,$}\end{array}\right.\\[15pt]
{\bH}_{k,j}=\left\{\begin{array}{ll}\bH_{A},&\mbox{if $\beta_{\rho(k)}=x_j$ for some $j;$}\\
\bH_{B},&\mbox{if $\beta_{\rho(k)}={\bar x}_j$ for some
$j,$}\end{array}\right.
\end{array}\end{equation}
   where $c_{k-3n}=\alpha_{\pi(k)}\vee \beta_{\rho(k)}$,
$\alpha$ and $\beta$ are taken from $\{x,{\bar x}\},$ and
$\pi$ and $\rho$ are mappings from $\{3n+1,3n+2,...,3n+m\}$ to
$\{1,2,...,n\}$.
} Set $\eta_k=1$ for all $k\in\K.$ Then, the SINR expressions of all users are: for $i=1,2,\ldots,n,$ 
\begin{equation}\label{variablerate1-uv}
 \SI_{i}=\frac{\left|\bu^{\dagger}_i\bH_A\bv_i\right|^2}{1 + \left|\bu^{\dagger}_i\bH_B\bv_{n+i}\right|^2+ \left|\bu^{\dagger}_i\bH_B\bv_{2n+i}\right|^2},
 \end{equation}
 \begin{equation}\label{variablerate2-uv}
   \SI_{n+i}=\frac{\left|\bu^{\dagger}_{n+i}\bH_A\bv_{n+i}\right|^2}{1 + \left|\bu^{\dagger}_{n+i}\bH_C\bv_{2n+i}\right|^2},
 \end{equation}
 and
 \begin{equation}\label{variablerate3-uv}
 \SI_{2n+i}=\frac{\left|\bu^{\dagger}_{2n+i}\bH_A\bv_{2n+i}\right|^2}{ 1 + \left|\bu^{\dagger}_{2n+i}\bH_C\bv_{n+i}\right|^2};
\end{equation}
for $j=1,2,\ldots,m,$ 
{\begin{equation}\label{clauserate-uv}\SI_{3n+j}=\left\{\!\!\!\!\!\!\!
  \begin{array}{ll}
&\displaystyle \frac{\left|\bu^{\dagger}_{3n+j}\bH_A\bv_{3n+j}\right|^2}{1 + \left|\bu^{\dagger}_{3n+j}\bH_A\bv_{i_1}\right|^2+ \left|\bu^{\dagger}_{3n+j}\bH_A\bv_{i_2}\right|^2},\,\text{if}~c_j=x_{i_1}\vee x_{i_2};\\[20pt]
&\displaystyle \frac{\left|\bu^{\dagger}_{3n+j}\bH_A\bv_{3n+j}\right|^2}{1 + \left|\bu^{\dagger}_{3n+j}\bH_A\bv_{i_1}\right|^2+ \left|\bu^{\dagger}_{3n+j}\bH_B\bv_{i_2}\right|^2},\,\text{if}~c_j=x_{i_1}\vee \bar x_{i_2};\\[20pt]
&\displaystyle \frac{\left|\bu^{\dagger}_{3n+j}\bH_A\bv_{3n+j}\right|^2}{1 + \left|\bu^{\dagger}_{3n+j}\bH_B\bv_{i_1}\right|^2+ \left|\bu^{\dagger}_{3n+j}\bH_A\bv_{i_2}\right|^2},\,\text{if}~c_j=\bar x_{i_1}\vee x_{i_2};\\[20pt]
&\displaystyle \frac{\left|\bu^{\dagger}_{3n+j}\bH_A\bv_{3n+j}\right|^2}{1 + \left|\bu^{\dagger}_{3n+j}\bH_B\bv_{i_1}\right|^2+ \left|\bu^{\dagger}_{3n+j}\bH_B\bv_{i_2}\right|^2},\,\text{if}~c_j=\bar x_{i_1}\vee \bar x_{i_2}.
  \end{array}\right.
\end{equation}}

Moreover, let $\gamma_{k}=1$ for all $k\in\K.$ Then, the constructed instance of problem \eqref{min-power-uv} is
\begin{equation}\label{constructed-uv}
\begin{array}{cl}
\displaystyle\min_{\left\{\bu_k,\,\bv_k\right\}} &\displaystyle \sum_{k\in\K}\left\|\bv_k\right\|^2\\[10pt]
\mbox{s.t.}&\displaystyle \SI_{k}\geq 1,\,k\in\cal K,\\[1pt]
&\displaystyle \|\bu_k\|^2 = 1,\,k\in\K,
\end{array}
\end{equation}where $\SI_k$ are given in \eqref{variablerate1-uv}, \eqref{variablerate2-uv}, \eqref{variablerate3-uv}, and \eqref{clauserate-uv}. 

Next, we show that there exists a truth assignment such that at least $M$
clauses are satisfied unanimously for the given MAX-2UNANIMITY
instance if and only if the optimal value of problem \eqref{constructed-uv} is less than or equal to $3n+2m-M.$

If there exists a truth assignment such that $M$ clauses in the
MAX-2UNANIMITY problem are unanimous, we claim that the optimal value of problem \eqref{constructed-uv} is less than or equal to $3n+2m-M.$ Let $\left\{x_i\right\}$ be the truth assignment such that $M$ clauses are unanimous in the MAX-2UNANIMITY problem. We set
$$\bu_i=\bu_{n+i}=\bu_{2n+i}=\bv_i=\bv_{n+i}=\bv_{2n+i}=\left(
                               \begin{array}{c}
                                 x_i \\
                                 1-x_i \\
                               \end{array}
                             \right)
,~i=1,2,\ldots,n.$$ With this, we can simply check that
$\SI_{k}\geq 1$ for all~$k=1,2,\ldots,3n.$ Furthermore, we consider SINR requirements of user $3n+j$ with $j=1,2,\ldots,m.$

\begin{itemize}
  \item [-] If the clause $c_j$ is unanimous, then we have either
  $$\SI_{3n+j}=\frac{\left|\bu^{\dagger}_{3n+j}\bH_A\bv_{3n+j}\right|^2}{1 + 2\left|\bu^{\dagger}_{3n+j}\be_1\right|^2}$$
  or $$\SI_{3n+j}=\frac{\left|\bu^{\dagger}_{3n+j}\bH_A\bv_{3n+j}\right|^2}{1 + 2\left|\bu^{\dagger}_{3n+j}\be_2\right|^2}.$$ In either cases, we can use a total transmission power of $1$ to make
  $\SI_{3n+j}\geq 1$ satisfied (by setting $\bu_{3n+j}=\bv_{3n+j}=\be_2$ in the former case and $\bu_{3n+j}=\bv_{3n+j}=\be_1$ in the latter case).
  \item [-] If the clause $c_j$ is not unanimous, then we must have
  $$\SI_{3n+j}=\frac{\left|\bu^{\dagger}_{3n+j}\bH_A\bv_{3n+j}\right|^2}{1 + \left|\bu^{\dagger}_{3n+j}\be_1\right|^2+\left|\bu^{\dagger}_{3n+j}\be_2\right|^2}.$$ In this case, we can use a total transmission power of $2$ to make $\SI_{3n+j}\geq 1$ satisfied (by setting $\bu={\sqrt{2}}\be/2$ and $\bv_{3n+j}=\be$).
\end{itemize}
As a result, if there exists a truth assignment such that at least $M$ clauses are satisfied unanimously, then the optimal value of problem \eqref{constructed-uv} is less than or equal to $3n+2m-M.$

For the converse part, assuming that the optimal value of problem \eqref{constructed-uv} is less than or equal to $3n+2m-M,$
we claim that at least $M$ clauses can be made unanimous. 
It follows from Lemma \ref{lemma_small_power_uv} {in Appendix \ref{app:lemma-uv}} that, for $i=1,2,\ldots,n,$ the optimal solution of problem \eqref{constructed-uv} must be
$$\bv_i\simeq\be_1~\text{or}~\bv_i\simeq\be_2.$$
 This, together with \eqref{clauserate-uv}, implies that the interference terms at user $3n+j$ must be $\left|\bu_{3n+j}^{\dagger}\be_1\right|^2$ and/or $\left|\bu_{3n+j}^{\dagger}\be_2\right|^2$ for all $j=1,2,\ldots,m.$ More specifically, there might be two cases:
\begin{itemize}
  \item [-] Case 1: the two interference terms at user $3n+j$ are the same, i.e., either both of them are $\left|\bu_{3n+j}^{\dagger}\be_1\right|^2$ or both of them are $\left|\bu_{3n+j}^{\dagger}\be_2\right|^2$;
  \item [-] Case 2: the two interference terms at user $3n+j$ are different, i.e., one is $\left|\bu_{3n+j}^{\dagger}\be_1\right|^2$ and the other is $\left|\bu_{3n+j}^{\dagger}\be_2\right|^2.$
\end{itemize}
If Case 1 happens for user $3n+j$, then the required total transmission power satisfying $\SI_{3n+j}\geq 1$ is at least $1;$ while if Case 2 happens for user $3n+j$, then the required total transmission power satisfying $\SI_{3n+j}\geq 1$ is at least $2.$ By the assumption that the optimal value of problem \eqref{constructed-uv} is less than or equal to $3n+2m-M,$ we know that Case 1 must happen at least $M$ times (Case 2 cannot happen more than $m-M$ times). Moreover, it can be checked that $$x_i=\left|v_{i1}\right|\in\left\{0,1\right\},~i=1,2,\ldots,n$$ is a truth assignment
which makes at least $M$ clauses in the MAX-2UNANIMITY problem satisfied unanimuously, where $v_{i1}$ is the first component of the vector $\bv_i$.

Finally, this transformation is in polynomial time. Since the MAX-2UNANIMITY
problem is NP-complete (cf. Lemma \ref{MAX-Full}), we conclude that the problem of checking the optimal value of problem \eqref{constructed-uv} is less than or equal to $3n+2m-M$ is strongly NP-hard. Hence, problem \eqref{min-power-uv} is strongly NP-hard. The proof of Theorem \ref{thm-min-power-uv} is completed.

}


\bibliographystyle{IEEEtran}
\bibliography{IEEEabrv,liuieeebibfiles}



%


%
%
%
\end{document}